\providecommand{\algorithmname}{Algorithm}
\newtheorem{proposition}{Proposition}
\newtheorem{corollary}{Corollary}
\begin{document}

\title{Online Placement of Multi-Component Applications in Edge Computing Environments}

\author{\IEEEauthorblockN{Shiqiang Wang, \IEEEmembership{Member,~IEEE,} Murtaza
Zafer, \IEEEmembership{Member,~IEEE,} and Kin K. Leung, \IEEEmembership{Fellow,~IEEE}}
\thanks{
This research was sponsored in part by the U.S. Army Research Laboratory and the U.K. Ministry of Defence and was accomplished under Agreement Number W911NF-06-3-0001 and W911NF-16-3-0001. The views and conclusions contained in this document are those of the author(s) and should not be interpreted as representing the official policies, either expressed or implied, of the U.S. Army Research Laboratory, the U.S. Government, the U.K. Ministry of Defence or the U.K. Government. The U.S. and U.K. Governments are authorized to reproduce and distribute reprints for Government purposes notwithstanding any copyright notation hereon.

S. Wang is with IBM T. J. Watson Research Center, Yorktown Heights, NY, United States. Email: wangshiq@us.ibm.com

M. Zafer is with Nyansa Inc., Palo Alto, CA, United States. E-mail: murtaza.zafer.us@ieee.org

K. K. Leung is with the Department of Electrical and Electronic Engineering, Imperial College London, United Kingdom. E-mail: kin.leung@imperial.ac.uk

Part of the material presented in this paper appeared in S. Wang's Ph.D. thesis \cite{wang2015dynamic}.

This is the author's version of the paper accepted for publication in IEEE Access,  DOI: 10.1109/ACCESS.2017.2665971.  \newline
\textcopyright 2017 IEEE. Personal use of this material is permitted. Permission from IEEE must be obtained for all other uses, in any current or future media, including reprinting/republishing this material for advertising or promotional purposes, creating new collective works, for resale or redistribution to servers or lists, or reuse of any copyrighted component of this work in other works.
}
}

\maketitle

\begin{abstract}
Mobile edge computing is a new cloud computing paradigm which makes use of small-sized edge-clouds to provide real-time services to users. These mobile edge-clouds (MECs) are located in close proximity to users, thus enabling users to seamlessly access applications running on MECs. Due to the co-existence of the core (centralized) cloud, users, and one or multiple layers of MECs, an important problem is to decide where (on which computational entity) to place different components of an application. This problem, known as the application or workload placement problem, is notoriously hard, and therefore, heuristic algorithms without performance guarantees are generally employed in common practice, which may unknowingly suffer from poor performance as compared to the optimal solution. In this paper, we address the application placement problem and focus on developing algorithms with provable performance bounds. We model the user application as an application graph and the physical computing system as a physical graph, with resource demands/availabilities annotated on these graphs. We first consider the placement of a linear application graph and propose an algorithm for finding its optimal solution. Using this result, we then generalize the formulation and obtain online approximation algorithms with polynomial-logarithmic (poly-log) competitive ratio for tree application graph placement. We jointly consider node and link assignment, and incorporate multiple types of computational resources at nodes.
\end{abstract}

\begin{IEEEkeywords}
\boldmath
Cloud computing, graph mapping, mobile edge-cloud (MEC), online approximation algorithm, optimization theory
\end{IEEEkeywords}

\section{Introduction}
\label{sec:intro}

Mobile applications relying on cloud computing became increasingly popular in the recent years \cite{ViewOfCloud,bahl2012advancing}. 
Different from traditional standalone applications that run solely on a mobile device, a cloud-based application has one or multiple components running in the cloud, which are connected to another component running on the handheld device and they jointly constitute an application accessible to the mobile user. Examples of cloud-based mobile applications include map, storage, and video streaming services \cite{abe2013vtube,ha2013WearableCognitiveAssistance}. They all require high data processing/storage capability that cannot be satisfied on handheld devices alone, thus it is necessary to run part of the application in the cloud.

Traditionally, clouds are located in centralized data-centers. One problem with cloud-based applications is therefore the long-distance communication between the user device and the cloud, which may cause intermittent connectivity and long latency that cannot satisfy the requirements of emerging interactive applications such as real-time face recognition and online gaming \cite{CommMagEdgeComput}. To tackle this issue, \emph{mobile edge-cloud (MEC)} has been proposed recently \cite{IBMWhitepaper,ETSIWhitepaper}. The idea is to have small cloud-like entities (i.e., MECs) deployed at the edge of communication networks, which can run part or all of the application components. These MECs are located close to user locations, enabling users to have seamless and low-latency access to cloud services. For example, they can co-locate with edge devices such as Wi-Fi access points or cellular base stations (BSs), as shown in Fig. \ref{chap:intro:fig:scenario}, forming up a hierarchy together with the centralized cloud and mobile users. The concept of MEC is similar to cloudlet \cite{CloudletHostile}, fog computing \cite{bonomi2012fog,peng2015fog}, follow me cloud \cite{FollowMeGC2013}, and small cell cloud \cite{becvar2014pimrc}.

\begin{figure}
\center{\includegraphics[width=1\linewidth]{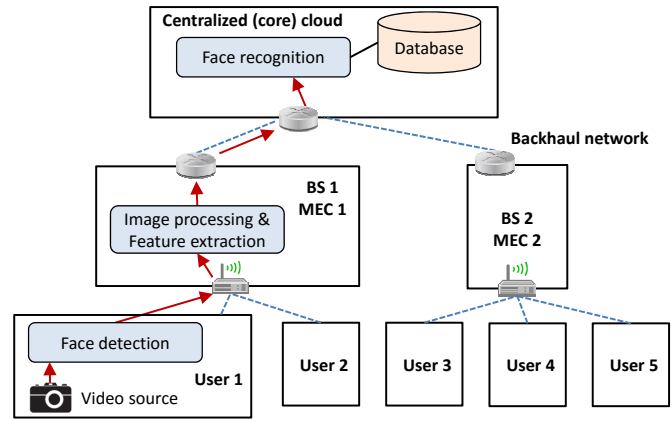}}

\protect\caption{Application scenario with mobile edge-clouds (MECs). Example scenario with face recognition application, where the dashed lines stand for physical communication links and red arrows stand for the data transmission path.}
\label{chap:intro:fig:scenario}
\end{figure}

Although MECs are promising, there are limitations. In particular, they have a significantly lower processing and storage capability compared to the core (centralized) cloud, thus it is usually infeasible to completely abandon the core cloud and run everything on MECs. An important problem is therefore to decide where (i.e., whether on the core cloud, MEC, or mobile device) to place different processing and storage components of an application. This is referred to as the \emph{application placement problem}, which is a non-trivial problem as illustrated by the example below.

\subsection{Motivating Example}
\label{subsec:MotivatingExample}
Consider an application which recognizes faces from a real-time video stream captured by the camera of a hand-held device. As shown in Fig.~\ref{chap:intro:fig:scenario}, we can decompose this application into one storage component (the database) and three different processing components including face detection (FD), image processing and feature extraction (IPFE), and face recognition (FR). The FD component finds areas of an image (a frame of the video stream) that contains faces. This part of image is sent to IPFE for further processing. The main job of IPFE is to filter out noise in the image and extract useful features for recognizing the person from its face. These features are sent to FR for matching with a large set of known features of different persons' faces stored in the database.

Fig.~\ref{chap:intro:fig:scenario} shows one possible placement of FD, IPFE, FR, and the database onto the hierarchical cloud architecture. This can be a good placement in some cases, but may not be a good placement in other cases. 

For example, the benefit of running FD on the mobile device instead of MEC is that it reduces the amount of data that need to be transferred between the mobile device and MEC. However, in cases where the mobile device's processing capability is strictly limited but there is a reasonably high bandwidth connection between the mobile device and MEC, it is can be good to place FD on the MEC.
Having the database in the core cloud can be beneficial because it can contain a large amount of data infeasible for the MEC to store. In this case, FR should also be in the core cloud because it needs to frequently query the database. 
However, if the database is relatively small and has locally generated contents, we may want to place the database and FR onto the MEC instead of the core cloud, as this reduces the backhaul network load.

We see that even with this simple application, it is non-straightforward to conceptually find the best placement, while many realistic applications such as streaming, multicasting, and data aggregation \cite{ChuMulticast,DataAggregationImpactWSN,ConcealedDataAggregation} are much more complex. 
We also note that MECs can be attached to devices at different cellular network layers \cite{ETSIWhitepaper}, yielding a hierarchical cloud structure with more than three layers.
Meanwhile, there usually exist multiple applications that are instantiated at the cloud system over time. All these aspects motivate us to consider the application placement problem in a rigorous optimization framework where applications arrive in an \emph{online} manner, i.e., they arrive sequentially over time and we do not have knowledge on the characteristics of future applications at any point of time.

We will abstract the application placement problem as the problem of placing application graphs, which represent application components and the communication among these components, onto a physical graph, which represents the computing devices and communication links in the physical system, as shown in Fig. \ref{fig:intro}. 
For example, the face recognition application above can be abstracted as an application graph with four nodes connected in a chain (line), where each of the nodes represents the database, FR, IPFE, and FD in sequential order.
The detailed problem formulation will be presented in Section \ref{sec:ProblemFormulation}.

\begin{figure}
\center{\includegraphics[width=1\columnwidth]{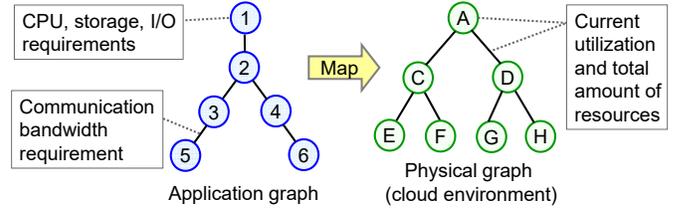}}
\caption{The application placement problem.}
\label{fig:intro} 
\end{figure}

\subsection{Related Work}
\label{sub:relatedWork}
A body of existing work on application placement and scheduling in MECs has considered applications with two components, one running on a cloud (can be either MEC or core cloud) and the other running on the user \cite{FollowMeGC2013, MDPFollowMeICC2014, wang2015IFIPNetworking, urgaonkar2015performance}.
Another body of existing work, which is also known under the term ``cloud offloading'', usually involves only two physical computing entities (i.e., the mobile device and the cloud)  \cite{cloudOffload1,cloudOffload2,ISIT2016}. 
Multi-component applications that can be deployed accross one or multiple levels of MECs and core cloud(s) have not been considered, whereas such applications widely exist in practice because MEC servers can locate at multiple network equipments in different hierarchical levels~\cite{ETSIWhitepaper}.

The multi-component application placement problem has been studied mainly in data-center settings.
Because this problem is NP-hard even for simple graphs (as we discuss later), a common practice is to employ heuristic algorithms without performance guarantees \cite{VNEmbeddingSurvey, refEfficientPlacementVI}, which may unknowingly suffer from poor performance as compared to the optimal solution. Only a very limited amount of existing work followed a rigorous theoretical framework from approximation algorithms \cite{bookApproxAlg} and competitive analysis \cite{bookOnlineComputation}, and proposed approximation algorithms (i.e., approximately optimal algorithms)
with provable approximation/competitive ratios\footnote{For a minimization problem, the \emph{competitive ratio} is defined as an upper bound of the online approximation algorithm's cost to the true optimal cost that can be obtained from an offline placement, where the offline placement considers all application graphs simultaneously instead of considering them arriving over time. The definition of approximation ratio is the same but it is for offline problems.} for the application placement problem, in particular when it involves both node and link placements.

In \cite{refViNEYard}, the authors
proposed an algorithm for minimizing the sum cost while considering load balancing, which has an approximate
approximation ratio of $O(N)$, where $N$ is the number of nodes
in the physical graph. The algorithm is based on linear program (LP) relaxation, and only allows one node in
each application graph to be placed on a particular physical node; thus, excluding
server resource sharing among different nodes in one application graph. 
It is shown that the approximation ratio of this algorithm is $O(N)$, which is trivial because one would achieve the same approximation ratio when placing the whole application graph onto a single physical node instead of distributing it across the whole physical graph.


A theoretical work in \cite{PODC2011} proposed
an algorithm with $N{}^{O(D)}$ time-complexity and an approximation
ratio of $\delta=O(D^{2}\log(ND))$ for placing a tree application graph with $D$ levels of nodes onto a physical
graph. It uses LP relaxation and its goal is to minimize the sum cost. Based on this algorithm, the authors presented an online
algorithm for minimizing the maximum load on each node and link, which
is $O(\delta\log(N))$-competitive when the application lifetimes are equal. 
The LP formulation in \cite{PODC2011} is complex and requires $N{}^{O(D)}$ variables and constraints. This means when $D$ is not a constant, the space-complexity (specifying the required memory size of the algorithm) is exponential in $D$.

Another related theoretical work which proposed an LP-based
method for offline placement of paths into trees in data-center networks was reported in \cite{PathsIntoTrees}.
Here, the application nodes can only be placed onto the leaves of
a tree physical graph, and the goal is to minimize link congestion. In our problem, the application nodes are distributed across users, MECs, and core cloud, thus they should not be only placed at the leaves of a tree so the problem formulation in \cite{PathsIntoTrees} is inapplicable to our scenario. Additionally, \cite{PathsIntoTrees} only focuses on minimizing link congestion. The load balancing of nodes is not considered as part of the objective; only the capacity limits of nodes are considered.


Some other related work focuses on graph partitioning, such as \cite{AllocDistributedCloud} and \cite{OptApproxLatencyMin}, where the physical graph is defined as a complete graph with edge costs associated with the distance or latency between physical servers. Such an abstraction combines multiple network links into one (abstract) physical edge, which may hide the actual status of individual links along a path.  

A related problem that has emerged recently is the service chain embedding problem \cite{serviceChainInfocom2016,rost2016service,Lukovszki2015}. Motivated by network function virtualization (NFV) applications, the goal is to place a linear application graph between fixed source and destination physical nodes, so that a series of operations are performed on data packets sent from the source to the destination. Within this body of work, only \cite{Lukovszki2015} has studied the competitive ratio of online placement, which, however, does not consider link placement optimization.

One important aspect to note is that most existing work, including \cite{refViNEYard,PathsIntoTrees,AllocDistributedCloud,OptApproxLatencyMin,serviceChainInfocom2016,rost2016service},  do not specifically consider the online operation of the algorithms. Although some of them implicitly claim that one can apply the algorithm repeatedly for each newly arrived application, the competitive ratio of such procedure is unclear. 
To the best of our knowledge, \cite{PODC2011} is the only work that studied the competitive ratio of the online application placement problem that considers \emph{both node and link placements}.

\subsection{Our Approach}
\label{sub:contributionsInIntro}
In this paper, we focus on the MEC context and propose algorithms for solving the online application placement problem with provable competitive ratios. Different from \cite{PODC2011}, our approach is \emph{not} based on LP relaxation. 
Instead, our algorithms are built upon a baseline algorithm that provides an \emph{optimal} solution to the placement of a linear application graph (i.e., an application graph that is a line). This is an important novelty in contrast to \cite{PODC2011} where no optimal solution was presented for any scenario. Many applications expected to run in an MEC environment can be abstracted as hierarchical graphs, and the simplest case of such a hierarchical graph is a line, such as the face recognition example in Section \ref{subsec:MotivatingExample}. Therefore, the placement of a linear application graph is an important problem in the context of MECs.

Another novelty in our work, compared to \cite{PODC2011} and most other approaches based on LP relaxation, is that our solution approach is \emph{decomposable} into multiple small building blocks. This makes it easy to extend our proposed algorithms to a distributed solution in the future, which would be very beneficial for reducing the amount of necessary control information exchange among different cloud entities in a distributed cloud environment containing MECs. This decomposable feature also makes it easier to use these algorithms as a sub-procedure for solving a larger problem. 

It is also worth noting that the analytical methodology we use in this paper is new compared to existing techniques such as LP relaxation, thus we enhance the set of tools for online optimization.
The theoretical analysis in this paper also provides insights on the features and difficulty of the problem, which can guide future practical implementations. In addition, the proposed algorithms themselves are relatively easy to implement in practice.



\subsection{Main Results}

\label{sec:mainResults}

We propose non-LP based approximation algorithms for
online application placement in this paper. The general problem of application
placement is hard to approximate \cite{PODC2011,PathsIntoTrees,serviceChainInfocom2016,QAPApprox}. For example, \cite{serviceChainInfocom2016} has shown that theoretically, there exists no polynomial-time approximation algorithm with bounded approximation ratio for a service chain embedding problem  that considers linear application graph and general physical graphs, and has both node and link capacity constraints. Therefore, similar to related work \cite{refViNEYard,PODC2011,PathsIntoTrees,AllocDistributedCloud,OptApproxLatencyMin,serviceChainInfocom2016,rost2016service,Lukovszki2015}, we make a few simplifications to make the problem tractable. These simplifications are driven by realistic MEC settings and their motivations are described as follows.

Throughout this paper, we focus on application and physical graphs that have tree topologies. 
This is due to the consideration that a tree application
graph models a wide range of MEC applications that involve a hierarchical set of processes (or virtual machines), including streaming, multicasting, and data aggregation applications
\cite{ChuMulticast,DataAggregationImpactWSN,ConcealedDataAggregation} such as the exemplar face recognition application presented earlier. 
For the physical system, we consider tree physical graphs due to the hierarchical nature of MEC environment (see Fig. \ref{chap:intro:fig:scenario}).
We note that the algorithms we propose in this paper \emph{also works with several classes of non-tree graphs} and an example will be given in Section~\ref{sec:discussion}. For ease of presentation, we mainly focus on tree graphs in this paper.

In the tree application graph, if we consider
any path from the root to a leaf, we only allow those assignments\footnote{We interchangeably use the terms ``placement'', ``assignment'', and
``mapping'' in this paper.} where the application nodes along this
path are assigned in their respective order on a sub-path
of the physical topology (multiple application nodes may still be placed onto one physical node), thus, creating a ``cycle-free''
placement. Figure~\ref{fig:cycles} illustrates this placement. Let nodes 1 to 5 denote the application nodes
along a path in the application-graph topology. The cycle-free placement
of this path onto a sub-path of the physical network ensures the order is preserved  (as shown in
Fig.~\ref{fig:cycles}(b)), whereas
the order is not preserved in Fig.~\ref{fig:cycles}(c). A cycle-free
placement has a clear motivation of avoiding cyclic communication
among the application nodes. For example, for the placement in Fig.~\ref{fig:cycles}(c),
application nodes 2 and 4 are placed on physical node B, while application
node 3 is placed on physical node C. In this case, the physical
link B--C carries the data of application links 2--3 and 3--4 in
a circular fashion. Such traffic can be naturally avoided with a cycle-free
mapping (Fig.~\ref{fig:cycles}(b)), thus relieving congestion on the communication links.
As we will see in the simulations in Section \ref{sec:SimulationResults}, the cycle-free constraint still allows the proposed scheme to outperform some other comparable schemes that allow cycles.
Further discussion on the approximation ratio associated with cycle-free restriction is given in Appendix \ref{app:ApproxRatioCycleFree}.

\begin{figure}
\center{\includegraphics[width=0.8\columnwidth]{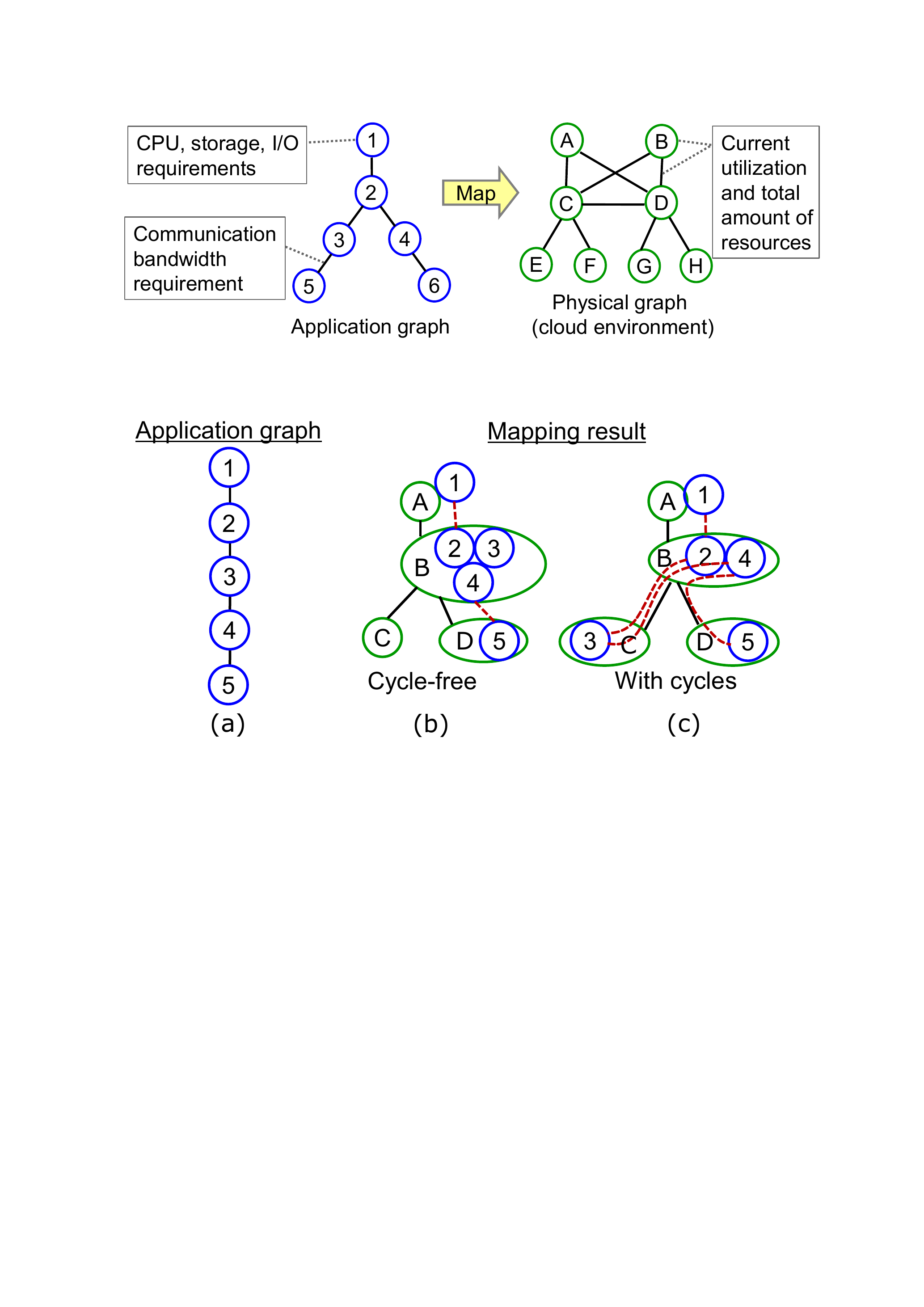}}

\caption{Mapping with and without cycles. In this example, the path in the
application graph is between application node 1 and application node
5.}

\label{fig:cycles} 

\end{figure}

In this paper, for the purpose of describing the algorithms, we classify an application node as a \emph{junction node} in the tree application graph when it has two or more children.
These junction nodes may represent data splitting or joining processes for multiple data streams. In some cases, they may have pre-specified placement, because they serve multiple data streams that may be associated with different end-users, and individual data streams may arrive dynamically in an online fashion. Our work first considers cases where the placements of all junction nodes (if any) are pre-specified, and then extends the results to the general case where some junction nodes are not placed beforehand. A linear application graph (such as the exemplar face recognition application in Section \ref{subsec:MotivatingExample}) has no junction nodes and it falls into the first category.

For the aforementioned scenario, we obtain the following main results
for the problem of application placement with the goal of load balancing
among physical nodes and edges: 
\begin{enumerate}
\item An optimal offline algorithm for placing a single application graph which is a linear graph, with $O(V{}^{3}N{}^{2})$ time-complexity and $O(VN(V+N))$ space-complexity, where the application graph has $V$ nodes and the physical graph has $N$ nodes. 
\item An online approximation algorithm for placing single or multiple tree application
graphs, in which the placements of all junction nodes are pre-specified, i.e., their placements are given. This algorithm has a time-complexity
of $O(V{}^{3}N{}^{2})$ and a space-complexity of $O(VN(V+N))$ for each application graph placement; its
competitive ratio is $O(\log N)$. 
\item An online approximation algorithm for placing single or multiple tree application
graphs, in which the placements of some junction nodes are \emph{not} pre-specified. This algorithm has a time-complexity of $O(V{}^{3}N{}^{2+H})$ and a space-complexity of $O(VN^{1+H}(V+N))$ for each application graph placement; its competitive ratio is
$O(\log^{1+H}N)$, where $H$ is the maximum number of 
junction nodes without given placement on any single path from the root to
a leaf in the application graph. Note that we always have $H\leq D$,
where $D$ is the depth of the tree application graph. 
\end{enumerate}

Our work considers multiple types of resources on each physical
node, such as CPU, storage, and I/O resources. The
proposed algorithms can work with domain constraints which restrict
the set of physical nodes that a particular application node can be
assigned to. The exact algorithm for single line placement can also
incorporate conflict constraints where some assignments are
not allowed for a pair of adjacent application nodes that are connected by an application edge; such constraints
may arise in practice due to security policies  as discussed in
\cite{VNEmbeddingSurvey}.

The remainder of this paper is organized in the following: Section \ref{sec:ProblemFormulation} describes the problem formulation. Section \ref{sec:singleLinePlacement} presents the exact optimal placement algorithm for single linear application graph. Online approximation algorithms for tree-to-tree placement are discussed in Section \ref{sec:onlinealgo}. Section \ref{sec:SimulationResults} shows numerical results. Some insights and observations are discussed in Section \ref{sec:discussion}. Section \ref{sec:conclusion} draws conclusions.

\section{Problem Formulation}

\label{sec:ProblemFormulation}

\subsection{Definitions}

We consider the placement of application graphs onto a physical graph,
where the application graphs represent applications that may arrive in
an online manner. In the following, we introduce some notations that will be used in this paper.

\textbf{Application Graph:} An application
is abstracted as a graph, in which nodes represent the processing/computational modules of the application (such as virtual machines or containers containing running application processes),
and edges represent the communication demand between nodes (such as information sharing among different application processes).
Each node $v\in\mathcal{{V}}$
in the application graph $\mathcal{{R}}=(\mathcal{{V}},\mathcal{{E}})$ is associated with parameters that represent
the computational resource (of $K$ different types) demands of node
$v$. Similarly, each edge $e\in\mathcal{{E}}$ is associated with
a communication bandwidth demand.  The notation $e=(v_{1},v_{2})$
denotes that application edge $e$ connects application nodes $v_{1}$ and $v_{2}$.
The application graph $\mathcal{{R}}$ can be either a directed or an undirected graph. If it is a directed graph, the direction of edges specify the direction of data communication; if it is an undirected graph, data communication can occur in either direction along application edges.

\textbf{Physical Graph:} The physical computing system is also abstracted as a graph, with nodes denoting computing devices\footnote{Multiple individual servers can be seen as a single entity if they constitute a single cloud.}
and edges denoting communication links between nodes.  
Each node $n\in\mathcal{{N}}$ in the physical graph  $\mathcal{{Y}}=(\mathcal{{N}},\mathcal{{L}})$
has $K$ different types of computational resources, and each
edge $l\in\mathcal{{L}}$ has communication resource. A physical node can also represent a network device such as a router or switch with zero computational resource.
We use the notation $l=(n_{1},n_{2})$ to denote that physical link $l$ connects physical nodes
$n_{1}$ and $n_{2}$.
Similar to the application graph, the physical graph can be either directed or undirected, depending on whether the physical links are bidirectional (i.e., communication in both directions share the same link) or single-directional (i.e., communication in each direction has a separate link).

Because we consider multiple application graphs in this paper, we denote the tree application graph for
the $i$th application arrival as $\mathcal{{R}}(i)=(\mathcal{{V}}(i),\mathcal{{E}}(i))$.
Throughout this paper, we define $V=|\mathcal{{V}}|$, $E=|\mathcal{{E}}|$, $N=|\mathcal{{N}}|$,
and $L=|\mathcal{{L}}|$, where $|\cdot|$ denotes the number of elements
in the corresponding set.

We consider undirected application and physical graphs in the problem formulation, which means that data can flow in any direction on an edge, but the proposed algorithms can be easily extended to many types of directed graphs. For example, when the tree application graph is directed and  the tree physical graph is undirected, we can merge the two application edges that share the same end nodes in different directions into one edge, and focus on the merged undirected application graph for the purpose of finding optimal placement. This does not affect the optimality because for any placement of application nodes, there is a unique path connecting two different application nodes due to the cycle-free constraint and the tree structure of physical graphs. Thus, application edges in both directions connecting the same pair of application nodes have to be placed along the same path on the physical graph.

\textbf{Costs:} 
For the $i$th application, the weighted cost (where the weighting factor can serve as a normalization to the total resource capacity) for type $k\in\{1,2,...,K\}$ resource of placing $v$ to $n$ is denoted by ${d}_{v\rightarrow n,k}(i)$. Similarly, the weighted communication bandwidth cost of assigning $e$ to $l$ is denoted by ${b}_{e\rightarrow l}(i)$. The edge cost is also defined for a dummy link $l=(n,n)$, namely a non-existing link that connects the same node, to take into account the additional cost when placing two application nodes on one physical node. It is also worth noting that an application edge may be placed onto multiple physical links that form a path.

\emph{Remark:} The cost of placing the same application node (or edge) onto different physical nodes (or edges) can be different. This is partly because different physical nodes and edges may have different resource capacities, and therefore different weighting factors for cost computation. It can also be due to the domain and conflict constraints as mentioned earlier. If some mapping is not allowed, then we can set the corresponding mapping cost to infinity. Hence, our cost definitions allow us to model a wide range of access-control/security policies.


\textbf{Mapping:} 
A mapping is specified by $\pi:\mathcal{{V}}\rightarrow\mathcal{{N}}$.
Because we consider
tree physical graphs with the cycle-free restriction, there exists only one path between two nodes
in the physical graph, and we use $(n_{1},n_{2})$ to denote either
the link or path between nodes $n_{1}$ and $n_{2}$. We use the notation
$l\in(n_{1},n_{2})$ to denote that link $l$ is included in
path $(n_{1},n_{2})$. The placement of nodes
automatically determines the placement of edges. 

In a successive placement of the $1$st up to the $i$th application, each physical node $n\in\mathcal{{N}}$
has an aggregated weighted cost~of 
\begin{equation}
{p}_{n,k}(i)=\sum_{j=1}^{i}\sum_{v:\pi(v)=n}{d}_{v\rightarrow n,k}(j),
\label{aggNodeCost}
\end{equation}
where the second sum is over all $v$ that are mapped to $n$. Equation (\ref{aggNodeCost}) gives the total cost of type $k$ resource requested by all application nodes that are placed on node $n$, upto the $i$th application. Similarly, each physical
edge $l\in\mathcal{{L}}$ has an aggregated weighted cost of
\begin{equation}
{q}_{l}(i)=\sum_{j=1}^{i}\sum_{e=(v_{1},v_{2}):\left(\pi(v_{1}),\pi(v_{2})\right)\ni l}{b}_{e\rightarrow l}(j),
\end{equation}
where the second sum is over all application edges $e=(v_{1},v_{2})$ for which the path between the physical nodes $\pi(v_{1})$ and $\pi(v_{2})$ (which $v_1$ and $v_2$ are respectively mapped to) includes the link $l$.

\subsection{Objective Function}
\label{subSec:overallObjectiveinProbFormulation}
The optimization objective in this paper is load balancing
for which the objective function is defined as
\begin{equation}
\min_{\pi}\max\left\{ \max_{k,n}{p}_{n,k}(M);\max_{l}{q}_{l}(M)\right\} ,\label{eq:objOnline}
\end{equation}
where $M$ is the total number of applications (application graphs). Equation (\ref{eq:objOnline}) aims to minimize the maximum weighted cost on each
physical node and link, ensuring that
no single element gets overloaded and becomes a point of failure, which is important especially in the presence of bursty traffic. Such an objective is widely used in the literature \cite{choudhury2014rejecting,AzarLoadBalancingSurvey}.

\emph{Remark:} While we choose the objective function (\ref{eq:objOnline}) in this paper, we do
realize that there can be other objectives as well, such as minimizing
the total resource consumption. We note that the exact algorithm for the placement of a single linear application graph \emph{can be generalized to a wide class of other objective functions}
as will be discussed in Section~\ref{sub:line-extensions}. For simplicity, we restrict our attention to the objective function in (\ref{eq:objOnline}) in most parts of our discussion.
We also note that our objective function incorporates both node and link resource consumptions, which is important in MEC environments where both node (server) and communication link conditions are closely related to the service performance.

\emph{A Note on Capacity Limit: }
For simplicity, we do not impose capacity constraints on physical nodes and links in the optimization problem defined in (\ref{eq:objOnline}), because even without the capacity constraint, the problem is very hard as we will see later in this paper. However, because the resource demand of each application node and link is specified in every application graph, the total resource consumption at a particular physical node/link can be calculated by summing up the resource demands of application nodes/links that are placed on it.
Therefore, an algorithm can easily check within polynomial time whether the current placement violates the capacity limits. If such a violation occurs, it can simply reject the newly arrived application graph.

In most practical cases, the costs of node and link placements should be defined as proportional to the resource occupation when performing such placement, with weights inversely proportional to the capacity of the particular type of resource. With such a definition, the objective function (\ref{eq:objOnline}) essentially tries to place as many application graphs as possible without increasing the maximum resource occupation (normalized by the resource capacity) among all physical nodes and links. Thus, the placement result should utilize the available resource reasonably well.
A more rigorous analysis on the impact of capacity limit is left as future work.

\section{Basic Assignment Unit: Single Linear Application Graph Placement}
\label{sec:singleLinePlacement}
We first consider
the assignment of a single linear application graph (i.e., the application nodes are connected in a line), where the goal is
to find the best placement of application nodes onto a path in the
tree physical graph under the cycle-free constraint (see
Fig.~\ref{fig:cycles}). The solution to this problem
forms the building block of other more sophisticated algorithms presented
later. As discussed next, we develop an algorithm that can find the optimal solution to
this problem. We omit the application index
$i$ in this section because we focus on a single application, i.e., $M=1$, here.

\subsection{Sub-Problem Formulation}
\label{subSec:lineToLineProbFormulation}

Without loss of generality, we assume that $\mathcal{{V}}$ and $\mathcal{{N}}$
are indexed sets, and we use $v$ to exchangeably denote elements
and indices of application nodes in $\mathcal{{V}}$, and use $n$ to  exchangeably denote elements
and indices of physical nodes in $\mathcal{{N}}$. This index (starting from $1$ for the root node) is determined
by the topology of the graph. In particular, it can be determined
via a breadth-first or depth-first indexing on the tree graph (note that  linear graphs are a special type of tree graphs).
From this it follows that, if $n_{1}$ is a parent of $n_{2}$, then
we must have $n_{1}<n_{2}$. The same holds for the application nodes $\mathcal{{V}}$. 

With this setting, the edge cost can be combined together with the cycle-free constraint into a single definition of
pairwise costs. The weighted pairwise cost of placing $v-1$ to
$n_{1}$ and $v$ to $n_{2}$ is denoted by $c_{(v-1,v)\rightarrow(n_{1},n_{2})}$,
and it takes the following values with $v\geq2$:
\begin{itemize}
\item[--] If the path from $n_{1}$ to $n_{2}$ traverses some $n<n_{1}$, in which case the cycle-free assumption is violated, then $c_{(v-1,v)\rightarrow(n_{1},n_{2})}=\infty$. 
\item[--] Otherwise,
\begin{equation}
c_{(v-1,v)\rightarrow(n_{1},n_{2})}=\max_{l\in(n_{1},n_{2})} {b}_{(v-1,v)\rightarrow l} \Big{|}_{\left(\pi(v-1),\pi(v)\right)\ni l}.\label{eq:pairwiseCost}
\end{equation}
\end{itemize}
The maximum operator in (\ref{eq:pairwiseCost}) follows from the fact that, in the single line
placement, at most one application edge can be placed onto a physical
link. Also recall that the edge cost definition incorporates dummy links such as $l=(n,n)$, thus there always exists $l\in(n_{1},n_{2})$ even if $n_{1}=n_{2}$.

Then, the optimization problem (\ref{eq:objOnline}) with $M=1$ becomes
\begin{align}
\min_{\pi}\max  \Bigg\{ & \max_{k,n}\sum_{v:\pi(v)=n}{d}_{v\rightarrow n,k}; \nonumber \\
& \max_{(v-1,v)\in\mathcal{{E}}}c_{(v-1,v)\rightarrow(\pi(v-1),\pi(v))}\Bigg\}. \label{eq:objOfflineWithPairwise}
\end{align}
The last maximum operator in (\ref{eq:objOfflineWithPairwise}) takes
the maximum among all application edges (rather than physical links),
because when combined with the maximum in (\ref{eq:pairwiseCost}),
it essentially computes the maximum among all physical links that are used for data transmission under the mapping $\pi$.

\subsection{Decomposing the Objective Function}

In this subsection, we decompose the objective function in (\ref{eq:objOfflineWithPairwise})
to obtain an iterative solution. Note that the objective function
(\ref{eq:objOfflineWithPairwise}) already incorporates all the constraints as discussed earlier. Hence, we only
need to focus on the objective function itself.

When only considering a subset of application nodes $1,2,...,v_{1}\leq V$,
for a given mapping $\pi$, the value of the objective function for
this subset of application nodes is 
\begin{align}
J_{\pi}(v_{1}) = \max\Bigg\{ & \max_{k,n}\sum_{v\leq v_{1}:\pi(v)=n}{d}_{v\rightarrow n,k}; \nonumber \\
& \max_{(v-1,v)\in\mathcal{{E}},v\leq v_{1}}c_{(v-1,v)\rightarrow(\pi(v-1),\pi(v))}\Bigg\} . 
\label{eq:objOfflineWithPairwiseAppSubset}
\end{align}
Compared with (\ref{eq:objOfflineWithPairwise}), the only difference
in (\ref{eq:objOfflineWithPairwiseAppSubset}) is that we consider
the first $v_{1}$ application nodes and the mapping $\pi$ is assumed
to be given. The optimal cost for application nodes $1,2,...,v_{1}\leq V$
is then 
\begin{equation}
J_{\pi^{*}}(v_{1})=\min_{\pi}J_{\pi}(v_{1}),\label{eq:opt-objOfflineWithPairwiseAppSubset}
\end{equation}
where $\pi^{*}$ denotes the optimal mapping.

{\proposition \label{prop:singleline-objfun-decomp}(\textbf{Decomposition
of the Objective Function}): Let $J_{\pi^{*}|\pi(v_{1})}(v_{1})$
denote the optimal cost under the condition that $\pi(v_{1})$ is given,
i.e. $J_{\pi^{*}|\pi(v_{1})}(v_{1})=\min_{\pi(1),...,\pi(v_{1}-1)}J_{\pi}(v_{1})$ with given $\pi(v_{1})$. When $\pi(v_{1})=\pi(v_{1}-1)=...=\pi(v{}_{s})>\pi(v{}_{s}-1)\geq\pi(v{}_{s}-2)\geq...\geq\pi(1)$,
where $1\leq v_{s}\leq v_{1}$, which means that $v{}_{s}$ is mapped
to a different physical node from $v{}_{s}-1$ and nodes $v_{s},...,v_{1}$
are mapped onto the same physical node%
\footnote{Note that when $v_{s}=1$, then $v_{s}-1$ does not exist, which means
that all nodes $1,...,v_{1}$ are placed onto the same physical node.
For convenience, we define $J_{\pi}(0)=0$.%
}, then we have 
\begin{eqnarray}
J_{\pi^{*}|\pi(v_{1})}(v_{1}) \!\!\!\!\!& = &\!\!\!\!\! \min_{v{}_{s}=1,...,v_{1}}\min_{\pi(v_{s}-1)}\max\bigg\{ J_{\pi^{*}|\pi(v{}_{s}-1)}(v_{s}-1);\nonumber \\
 &  &\!\!\!\!\! \max_{k=1,...,K}\sum_{v=v{}_{s}...v_{1}}{d}_{v\rightarrow\pi(v_1),k};\nonumber \\
 &  &\!\!\!\!\!\!\!\!\!\! \left.\max_{(v-1,v)\in\mathcal{{E}},v_{s}\leq v\leq v_{1}}c_{(v-1,v)\rightarrow(\pi(v-1),\pi(v))}\right\} .\label{eq:iterativeFunction}
\end{eqnarray}
The optimal mapping for $v_1$ can be found by
\begin{equation}
J_{\pi^{*}}(v_{1})=\min_{\pi(v_{1})}J_{\pi^{*}|\pi(v_{1})}(v_{1}).\label{eq:iterativeFunctionFinal}
\end{equation}
}
\begin{proof}
Because $\pi(v_s)=\pi(v_s+1)=...=\pi(v_1)$, we have
\begin{eqnarray}
J_{\pi}(v_{1})\!\!\! & = & \!\!\!\max\left\{ J_{\pi}(v_{s}-1);\max_{k=1,...,K}\sum_{v=v{}_{s}...v_{1}}{d}_{v\rightarrow\pi(v_{1}),k};\right.\nonumber \\
 &  & \left.\max_{(v-1,v)\in\mathcal{{E}},v_{s}\leq v\leq v_{1}}c_{(v-1,v)\rightarrow(\pi(v-1),\pi(v))}\right\}\!\!.\label{eq:objDecomp}
\end{eqnarray}
The three terms in the maximum operation in (\ref{eq:objDecomp})
respectively correspond to: 1) the cost at physical nodes and edges
that the application nodes $1,...,v_{s}-1$ (and their connecting
edges) are mapped to, 2) the costs at the physical node that $v_{s},...,v_{1}$
are mapped to, and 3) the pairwise costs for connecting $v_{s}-1$
and $v_{s}$ as well as interconnections\footnote{Note that, although $v_{s},...,v_{1}$ are mapped onto the same physical node, their pairwise costs may be non-zero if there exists additional cost when placing different application nodes onto the same physical node. In the extreme case where adjacent application nodes are not allowed to be placed onto the same physical node (i.e., conflict constraint), their pairwise cost when placing them on the same physical node becomes infinity.} of nodes in $v_{s},...,v_{1}$. 
Taking the maximum of these three terms, we obtain the cost function
in (\ref{eq:objOfflineWithPairwiseAppSubset}).

In the following, we focus on finding the optimal mapping based on
the cost decomposition in (\ref{eq:objDecomp}). We note that the
pairwise cost between $v_{s}-1$ and $v_{s}$ depends on the placements
of both $v_{s}-1$ and $v_{s}$. Therefore, in order to find the optimal
$J_{\pi}(v_{1})$ from $J_{\pi}(v_{s}-1)$, we need to find the minimum cost among all possible placements of $v_{s}-1$ and $v_{s}$, provided that nodes $v_{s},...,v_{1}$
are mapped onto the same physical node and $v_{s}$ and $v_{s}-1$
are mapped onto different physical nodes. For a given $v_{1}$,
node $v_{s}$ may be any node that satisfies $1\leq v_{s}\leq v_{1}$.
Therefore, we also need to search through all possible values of $v_{s}$. This can then be expressed as the proposition, where we first find $J_{\pi^{*}|\pi(v_{1})}(v_{1})$ as an intermediate step.
\end{proof}

Equation (\ref{eq:iterativeFunction}) is the Bellman's equation \cite{powell2007approximate} for problem (\ref{eq:objOfflineWithPairwise}). Using dynamic programming \cite{powell2007approximate}, we can solve  (\ref{eq:objOfflineWithPairwise}) by iteratively
solving (\ref{eq:iterativeFunction}). In each iteration, the algorithm computes new
costs $J_{\pi^{*}|\pi(v_{1})}(v_{1})$ for all possible mappings $\pi(v_{1})$,
based on the previously computed costs $J_{\pi^{*}|\pi(v)}(v)$ where
$v<v_{1}$. For the final application node $v_{1}=V$, we use (\ref{eq:iterativeFunctionFinal})
to compute the final optimal cost $J_{\pi^{*}}(V)$ and its corresponding
mapping $\pi^{*}$.

\subsection{Optimal Algorithm}
\label{sub:LineToTreeAlgorithm} 
The pseudocode of the exact optimal
algorithm is shown in Algorithm \ref{algLineToTree}.
It computes $V\cdot N$ number of $J_{\pi^{*}|\pi(v)=n}(v)$
values, and we take the minimum
among no more than $V\cdot N$ values in (\ref{eq:iterativeFunction}). The terms in (\ref{eq:iterativeFunction})
include the sum or maximum of no more than $V$ values and the maximum
of $K$ values. Because $K$ is a constant in practical
systems, we conclude that the \emph{time-complexity of this algorithm is $O(V{}^{3}N{}^{2})$}.

The  \emph{space-complexity of Algorithm~\ref{algDoubling} is  $O(VN(V+N))$}, which is related to the memory required for storing matrices  $\mathbf{D}$, $\mathbf{C}$, $\mathbf{J}$, and $\mathbf{\Pi}$ in the algorithm, where $K$ is also regarded as a constant here.

Also note that the optimality of the result from Algorithm~\ref{algLineToTree} is subject to the cycle-free constraint, and the sequence of nodes is always preserved in each iteration.

\begin{algorithm}
\caption{Placement of a linear application graph onto a tree physical graph}
\label{algLineToTree}
{
\begin{algorithmic}[1]
\STATE
Given linear application graph $\mathcal{{R}}$, 
tree physical graph $\mathcal{{Y}}$
\STATE
Given $V\times N\times K$ matrix $\mathbf{D}$, its
entries represent the weighted type $k$ node cost ${d}_{v\rightarrow n,k}$
\STATE
Given $(V-1)\times N\times N$ matrix $\mathbf{C}$,
its entries represent the weighted pairwise cost $c_{(v-1,v)\rightarrow(n_{1},n_{2})}$
\STATE
Define $V\times N$ matrix $\mathbf{J}$ to keep the
costs $J_{\pi^{*}|\pi(v)=n}(v)$ for each node $(v,n)$ in the auxiliary graph 
\STATE
Define $V\times N\times V$ matrix $\mathbf{\Pi}$ to
keep the mapping corresponding to its cost $J_{\pi^{*}|\pi(v)=n}(v)$ for each node $(v,n)$ in the auxiliary graph
\FOR {$v=1...V$}
\FOR {$n=1...N$}
\STATE
Compute $J_{\pi^{*}|\pi(v)=n}(v)$ from (\ref{eq:iterativeFunction}),
put the result into $\mathbf{J}$ and the corresponding mapping
into $\mathbf{\Pi}$
\ENDFOR
\ENDFOR
\STATE
Compute $J_{\pi^{*}}(V) \leftarrow \min_{n}J_{\pi^{*}|\pi(V)=n}(V)$
\STATE \textbf{return} the final mapping result $\pi^{*}$ and final optimal cost $J_{\pi^{*}}(V)$ 
\end{algorithmic}
}
\end{algorithm}

\subsection{Example}

\label{sub:line-example}

\begin{figure}
\center{\includegraphics[width=1\columnwidth]{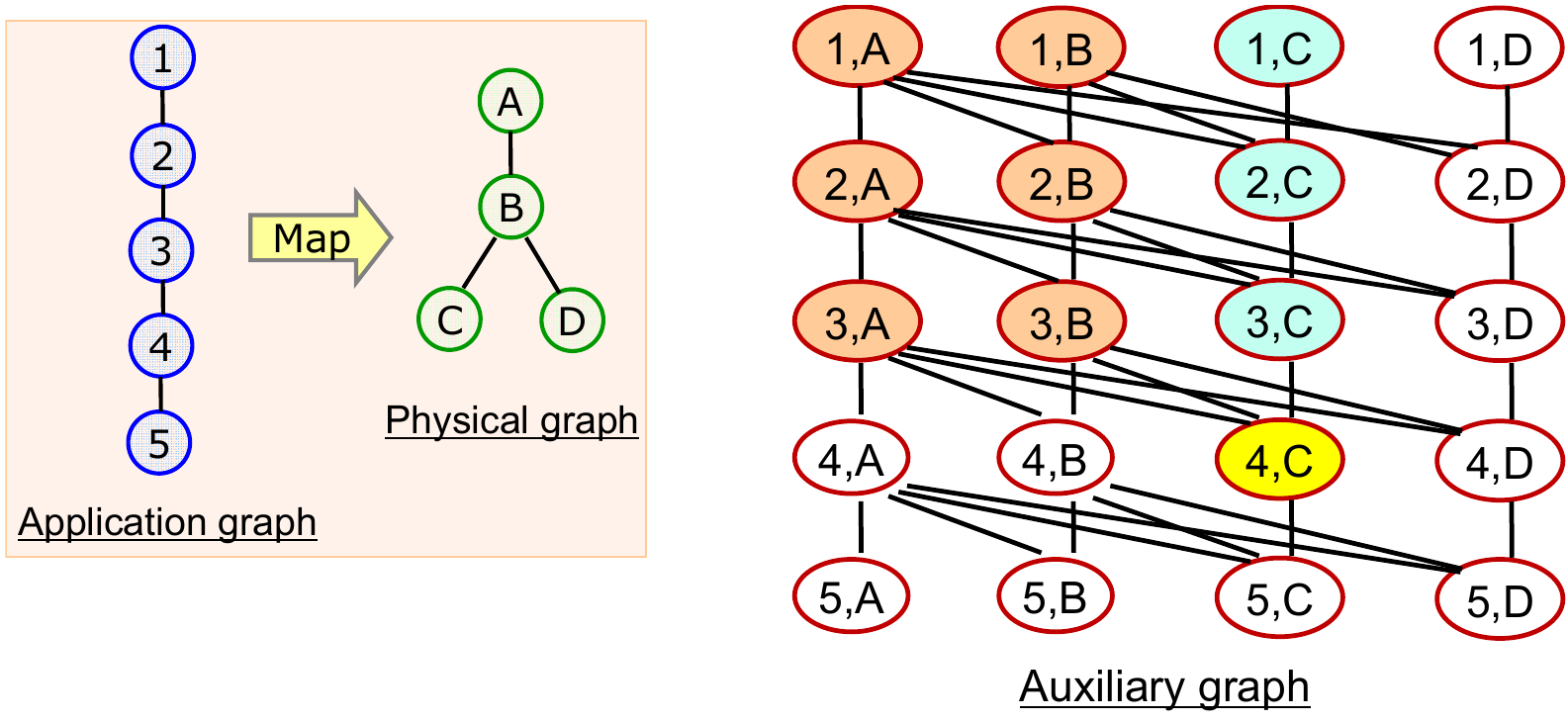}}
\caption{Auxiliary graph and algorithm procedure for the placement of a linear application graph onto a tree physical graph.}

\label{fig:auxGraph} 

\end{figure}

To illustrate the procedure of the algorithm, we construct an auxiliary
graph from the given application and physical graphs, as shown in
Fig. \ref{fig:auxGraph}. Each node $(v_1,n_1)$ in the auxiliary graph represents
a possible placement of a particular application node, and is associated with
the cost value $J_{\pi^{*}|\pi(v_{1})=n_1}(v_{1})$, where $v_{1}$ is
the application node index and $n_1$ is the physical node
index in the auxiliary graph. When computing the cost at a particular
node, e.g. the cost $J_{\pi^{*}|\pi(4)=\textrm{C}}(4)$
at node ($4$,C) in Fig. \ref{fig:auxGraph},
the algorithm starts from the ``earlier'' costs $J_{\pi^{*}|\pi(v{}_{s}-1)}(v_{s}-1)$
where the tuple $(v_{s}-1,\pi(v_{s}-1))$ is either ($1$,A),  ($1$,B), ($2$,A), ($2$,B), ($3$,A), or
($3$,B). From each of these nodes,
the subsequent application nodes (i.e. from $v_s$ to node $4$) are all mapped onto physical node
C, and we compute the cost for
each such ``path'' with the maximum operations in (\ref{eq:iterativeFunction}), by assuming the
values of $v_{s}-1$ and $\pi(v_{s}-1)$ are given by its originating
node in the auxiliary graph. For example, one path can be ($2$,B) -- ($3$,C) -- ($4$,C) where $v_{s}-1=2$ and $\pi(v_{s}-1)$ = 
B, another path can be ($1$,A) -- ($2$,C) -- ($3$,C) -- ($4$,C) where
$v_{s}-1=1$ and $\pi(v_{s}-1)$ = A. Then, the algorithm takes the minimum of the costs for all paths,
which corresponds to the minimum operations in (\ref{eq:iterativeFunction}) and gives $J_{\pi^{*}|\pi(4)=\textrm{C}}(4)$.
In the end, the algorithm searches through all the possible mappings
for the final application node (node $5$ in Fig. \ref{fig:auxGraph})
and chooses the mapping that results in the minimum cost, which corresponds to the procedure in (\ref{eq:iterativeFunctionFinal}).

\subsection{Extensions}

\label{sub:line-extensions}

The placement algorithm for single linear application graph can also be used when the objective function (in the form of (\ref{eq:objOnline}) with $M=1$) 
is modified to one of the following:
\begin{align}
\min_{\pi}\max  \Bigg\{ & \max_{k,n}f_{n,k}\Bigg(\sum_{v:\pi(v)=n}{d}_{v\rightarrow n,k}\Bigg);\nonumber \\
& \max_{l}g_{l}\Bigg(\sum_{e=(v_{1},v_{2}):\left(\pi(v_{1}),\pi(v_{2})\right)\ni l}{b}_{e\rightarrow l}\Bigg)\!\!\Bigg\},  \label{eq:objOfflineExtMax}
\end{align}
\begin{align}
\min_{\pi} & \Bigg\{ \sum_{k,n}f_{n,k}\Bigg(\sum_{v:\pi(v)=n}{d}_{v\rightarrow n,k}\Bigg) \nonumber \\
& + \sum_{l}g_{l}\Bigg(\sum_{e=(v_{1},v_{2}):\left(\pi(v_{1}),\pi(v_{2})\right)\ni l}{b}_{e\rightarrow l}\Bigg) \Bigg\} ,  
\label{eq:objOfflineExtSum}
\end{align}
where $f_{n,k}(\cdot)$ and $g_{l}(\cdot)$ are increasing functions with $f_{n,k}(0)=0$, $g_{l}(0)=0$, $f_{n,k}(\infty)=\infty$, and $g_l(\infty)=\infty$.
The algorithm and its derivation follow the same procedure as discussed
above. These alternative objective functions can be useful for scenarios where the goal of optimization is other than min-max. The objective function in (\ref{eq:objOfflineExtSum}) will also be used later for solving the online placement problem.

\section{Online Placement Algorithms for Tree Application Graphs}

\label{sec:onlinealgo} 

Using the optimal algorithm for the single
linear application graph placement as a sub-routine, we now present
algorithms for the generalized case; namely, placement of an arriving
stream of application graphs with tree topology. 
We first show that even the offline placement of a single tree is
NP-hard. Then, we propose online algorithms to approximate the optimal
placement with provable competitive ratio, by first
considering the case where junction nodes in the application
graph have pre-specified placements that are given beforehand, and later relax this assumption.

\subsection{Hardness Result}
\label{sec:hardnessTreeToTree}

\begin{proposition}\label{propTreeToTreeNPHard} (\textbf{NP-hardness})
Placement of a tree application graph onto a tree physical graph for
the objective function defined in (\ref{eq:objOnline}), with or without pre-specified junction node placement, is NP-hard.
\end{proposition}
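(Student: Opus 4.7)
The plan is to prove NP-hardness by a polynomial-time reduction from the PARTITION problem. Recall that PARTITION asks, given positive integers $a_1,\ldots,a_n$ with total sum $2S$, whether some $I\subseteq\{1,\ldots,n\}$ satisfies $\sum_{i\in I} a_i = S$; this is a classical NP-complete problem.

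For the reduction I would take the physical graph to consist of exactly two nodes $A,B$ joined by a single edge (trivially a tree), with a single resource type ($K=1$). Take the application graph to be a star: one junction (root) node $c$ with $n$ leaves $v_1,\ldots,v_n$. Set the node demands $d_{v_i\to n,1}=a_i$ (independent of the physical target $n$) and $d_{c\to n,1}=0$, and set every application-edge bandwidth demand (and therefore every pairwise cost) to $0$. Under this construction the objective (\ref{eq:objOnline}) collapses to $\max\{p_A,p_B\}$, where $p_n = \sum_{i:\pi(v_i)=n} a_i$. Because $p_A+p_B=2S$ is fixed, we have $\min_\pi\max\{p_A,p_B\}=S$ if and only if the PARTITION instance admits a balanced split. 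Everything about the construction is polynomial in $n$.

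Both formulations in the proposition are handled by the same instance. In the pre-specified-junction variant I would fix $\pi(c)=A$; since $c$ contributes zero to every node and edge load, the residual leaf-placement subproblem, and hence the equivalence with PARTITION, is unchanged. In the unspecified-junction variant $\pi(c)$ is free, but again contributes nothing to the loads no matter where it is placed. The cycle-free restriction is automatic: any depth-one star embedded into a two-node physical graph can only use the single root-to-leaf physical sub-path, so no cyclic traffic can arise.

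The hard part, such as it is, is checking that the construction actually fits the model of Section~\ref{sec:ProblemFormulation}: zero application-edge demands must be admissible (yes, since pairwise costs in (\ref{eq:pairwiseCost}) and bandwidth costs in the definition of ${q}_l$ are simply sums/maxima over possibly zero terms), the per-node demand is permitted to be identical across physical targets, the application graph is a valid tree, and the physical graph is a valid tree. All four conditions hold, so the reduction is correct and yields NP-hardness of both variants in the proposition. A stronger reduction from 3-PARTITION could be substituted with essentially the same gadget if strong NP-hardness were desired, but this is not needed for the claim as stated.
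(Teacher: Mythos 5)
Your reduction is correct and follows essentially the same route as the paper's proof: both specialize to a star application graph with zero edge demands so that the objective in (\ref{eq:objOnline}) collapses to pure node load balancing, the paper reducing from minimum makespan scheduling on unrelated parallel machines placed onto a line physical graph, and you reducing from PARTITION onto a two-node physical graph, which is just the two-identical-machine instance of the same scheduling problem. The only substantive difference is that PARTITION gives ordinary (weak) NP-hardness, which is all the proposition claims, whereas the paper's source problem is strongly NP-hard; your closing remark about substituting 3-PARTITION already addresses that nuance.
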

\begin{proof} 
To show that the given problem is NP-hard,
we show that the problem can be reduced from the NP-hard problem of minimum makespan
scheduling on unrelated parallel machines (MMSUPM) \cite{bookApproxAlg},
which minimizes the maximum load (or job processing time) on each
machine.

Consider a special case in our problem where the application graph
has a star topology with two levels (one root and multiple leaf nodes), and the physical graph
is a line with multiple nodes. Assume that
the number of resource types in the nodes is $K=1$, the application
edge resource demand is zero, and the application node resource demand
is non-zero. Then, the problem is essentially the MMSUPM problem.
It follows that the MMSUPM problem reduces to our problem. In other
words, if we can solve our problem in polynomial time, then we can
also solve the MMSUPM problem in polynomial time. Because MMSUPM is NP-hard, our problem is also NP-hard. The above result holds no matter whether the root node (junction node) of the application graph has pre-specified placement or not.
\end{proof}

\subsection{When All Junction Node Placements Are Given}

\begin{figure}
\center{\includegraphics[width=0.8\columnwidth]{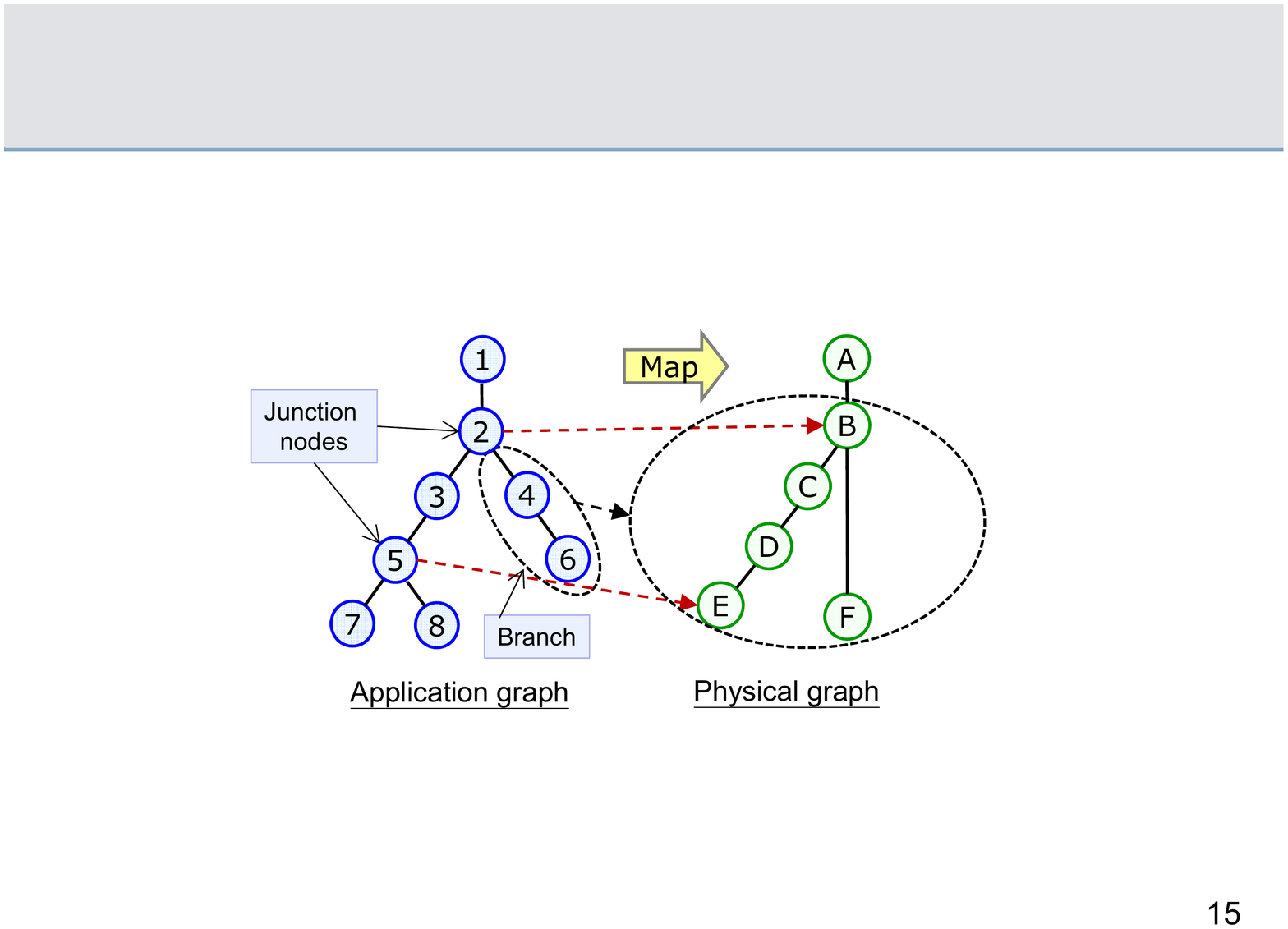}}
\caption{Example of application graph with given placement of junction nodes. Junction node $2$ is placed on physical node B  
and junction
node $5$ is placed on physical node E. The algorithm needs to decide
the placement of the remaining nodes, subject to the cycle-free constraint. }
\label{fig:exampleFixedDegree3} 
\end{figure}

\label{sec:FixedDegree3} We first consider tree application
graphs for which the placements of junction nodes are given, and focus on placing the remaining
non-junction nodes which are connected to at most two edges. An example is shown
in Fig. \ref{fig:exampleFixedDegree3}. Given the placed junction nodes,
we name the set of application edges and nodes that form a chain between
the placed nodes (excluding each placed node itself, but including 
each edge that is connected to a placed node) as a \emph{simple branch}, where the notion ``simple'' is opposed to the general branch which will be defined in Section \ref{sec:UnfixedDegree3}. 
A simple branch can also be a chain starting from an edge that connects a placed node
and ending at a leaf node, such as the nodes and edges within the
dashed boundary in the application graph in Fig. \ref{fig:exampleFixedDegree3}.
Each node in a simple branch is connected to at most two edges.

\subsubsection{Algorithm Design}
\label{sub:unplacedAlgDesign}
We propose an online placement algorithm, where we borrow some ideas from \cite{OnlineRoutingWithExpCost}. Different from \cite{OnlineRoutingWithExpCost}
which focused on routing and job scheduling problems, our work considers more general graph mapping.

When an application (represented by a tree application graph) arrives, we
\emph{split} the whole application graph into simple branches, and regard each simple branch
as an independent application graph. All the nodes with given placement can also be regarded
as an application that is placed before placing the individual simple branches.
After placing those nodes, each individual simple branch is placed using
the online algorithm that we describe below. In the remaining
of this section, by application we refer to the \emph{application after
splitting}, i.e. each application either consists of a simple branch or a set of nodes with given placement.

\textbf{How to Place Each Simple Branch: } While our ultimate goal is to optimize (\ref{eq:objOnline}), we use an \emph{alternative objective function} to determine the placement of each newly arrived application $i$ (after splitting). Such an indirect approach provides performance guarantee with respect to (\ref{eq:objOnline}) in the long run. We will first introduce the new objective function and then discuss its relationship with the original objective function (\ref{eq:objOnline}). 

We define a variable $\hat{J}$ as a reference cost. The reference
cost may be an estimate of the true optimal cost (defined as in (\ref{eq:objOnline})) from
optimal offline placement, and we will discuss later about how to determine this value. 
Then, for placing the $i$th application, we use an objective function which has the same form as (\ref{eq:objOfflineExtSum}), with $f_{n,k}(\cdot)$ and $g_{l}(\cdot)$ defined as
\begin{subequations}
\begin{align}
&f_{n,k}(x)     \triangleq     \exp_\alpha \! \left( {\frac{{p}_{n,k}(i-1)+x}{\hat{J}}} \right)\! - \!\exp_\alpha \! \left( {\frac{{p}_{n,k}(i-1)}{\hat{J}}} \right) \label{eq:onlineExpObj_fnk}  , \\
&g_{l}(x)    \triangleq      \exp_\alpha\!\left({\frac{{p}_{l}(i-1)+x}{\hat{J}}}\right)\!-\!\exp_\alpha\!\left({\frac{{p}_{l}(i-1)}{\hat{J}}}\right)
\label{eq:onlineExpObj_fl}  ,
\end{align}
\end{subequations}
subject to the cycle-free placement constraint, where we define $\exp_\alpha(y)\triangleq \alpha^y$ and $\alpha\triangleq 1+1/\gamma$ ($\gamma>1$ is a design parameter). 

\emph{Why We Use an Alternative Objective Function: }
The objective function (\ref{eq:objOfflineExtSum}) with (\ref{eq:onlineExpObj_fnk}) and (\ref{eq:onlineExpObj_fl}) is the increment of the sum exponential values of the original costs, given all the previous placements. With this objective function, the performance bound of the algorithm can be shown analytically (see Proposition \ref{prop:Bound} below). Intuitively, the new objective function (\ref{eq:objOfflineExtSum}) serves the following purposes:
\begin{itemize}
\item[--] ``Guide'' the system into a state such that the maximum cost among physical links and nodes is not too high, thus approximating the original objective function (\ref{eq:objOnline}). This is because when the existing cost at a physical link or node (for a particular resource type $k$) is high, the incremental cost (following (\ref{eq:objOfflineExtSum})) of placing the new application $i$ on this link or node (for the same resource type $k$) is also high, due to the fact that $\exp_\alpha(y)$ is convex increasing and the cost definitions in (\ref{eq:onlineExpObj_fnk}) and (\ref{eq:onlineExpObj_fl}). 

\item[--] While (\ref{eq:objOnline}) only considers the maximum cost, (\ref{eq:objOfflineExtSum}) is also related to the sum cost, because we sum up all the exponential cost values at different physical nodes and links together. This ``encourages'' a low resource consuming placement of the new application $i$ (which is reflected by low sum cost), thus leaving more available resources for future applications. In contrast, if we use (\ref{eq:objOnline}) directly for each newly arrived application, the placement may greedily take up too much resource, so that future applications can no longer be placed with a low cost. 
\end{itemize}
In practice, we envision that objective functions with a shape similar  to (\ref{eq:objOfflineExtSum}) can also serve  our purpose.

\emph{How to Solve It: }
Because each application either obeys a pre-specified placement or consists of a simple branch, we can use Algorithm~\ref{algLineToTree} with appropriately modified
cost functions to find the optimal solution to (\ref{eq:objOfflineExtSum}) with (\ref{eq:onlineExpObj_fnk}) and (\ref{eq:onlineExpObj_fl}).
For the case of a simple branch having an open edge, such as edge $(2,4)$ in
Fig. \ref{fig:exampleFixedDegree3}, we connect an application node
that has zero resource demand to extend the simple branch to a graph, so that Algorithm \ref{algLineToTree} is applicable.

\begin{algorithm}
\caption{Online placement of an application that is either a simple branch or a set of nodes with given placement}
\label{algMultiLine}
{
\begin{algorithmic}[1]
\STATE
Given the $i$th application that is either a set of nodes with given
placement or a simple branch
\STATE
Given tree physical graph $\mathcal{{Y}}$
\STATE
Given ${p}{}_{n,k}(i-1)$, ${q}{}_{l}(i-1)$, and placement
costs
\STATE
Given $\hat{J}$ and $\beta$
\IF {application is a set of nodes with given placement}
\STATE 
Obtain $\pi_{i}$ based on given placement
 
\ELSIF {application is a simple branch}
\STATE
Extend simple branch to linear graph $\mathcal{{R}}(i)$, by connecting zero-resource-demand application nodes to open edges,
and the placements of these zero-resource-demand application nodes are given
\STATE Run Algorithm \ref{algLineToTree} with objective function
(\ref{eq:objOfflineExtSum}) with (\ref{eq:onlineExpObj_fnk}) and (\ref{eq:onlineExpObj_fl}), for $\mathcal{{R}}(i)$, to obtain $\pi_{i}$
\ENDIF

\IF {$\exists n,k:{p}_{n,k}(i-1)+\sum_{v:\pi_{i}(v)=n}{d}_{v\rightarrow n,k}(i)>\beta\hat{J}$
or $\exists l:{q}_{l}(i-1)+\sum_{e=(v_{1},v_{2}):\left(\pi_{i}(v_{1}),\pi_{i}(v_{2})\right)\ni l}{b}_{e\rightarrow l}(i)>\beta\hat{J}$ }

\STATE \textbf{return} FAIL
\ELSE
\STATE \textbf{return} $\pi_{i}$
\ENDIF
\end{algorithmic}
}
\end{algorithm}

Algorithm \ref{algMultiLine} summarizes the above argument
as a formal algorithm for each application placement, where $\pi_{i}$ denotes the mapping for the $i$th application. Define a
parameter,  $\beta=\log_{\alpha}\left(\frac{\gamma(NK+L)}{\gamma-1}\right)$,
then Algorithm \ref{algMultiLine} performs the placement as long
as the cost on each node and link is not bigger than $\beta\hat{J}$,
otherwise it returns FAIL. The significance of the parameter $\beta$
is in calculating the competitive ratio, i.e., the maximum ratio of the cost
resulting from Algorithm \ref{algMultiLine} to the optimal cost from
an equivalent offline placement, as shown below.

\emph{Why We Need the Reference Cost $\hat{J}$: }
The reference cost $\hat{J}$ is an input parameter of the objective function (\ref{eq:objOfflineExtSum}) and Algorithm \ref{algMultiLine}, which enables us to show a performance bound for Algorithm \ref{algMultiLine}, as shown in Proposition \ref{prop:Bound}.

\begin{proposition}\label{prop:Bound}If there exists an offline
mapping $\pi^{o}$ that considers all $M$ application graphs and brings cost
$J_{\pi^{o}}$, such that $J_{\pi^{o}}\leq\hat{J}$, then Algorithm
\ref{algMultiLine} never fails, i.e., ${p}_{n,k}(M)$ and ${q}_{l}(M)$
from Algorithm \ref{algMultiLine} never exceeds $\beta\hat{J}$.
The cost $J_{\pi^{o}}$ is defined in (\ref{eq:objOnline}).
\end{proposition}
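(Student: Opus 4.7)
The plan is to carry out a classical exponential-potential argument of the sort used for online load balancing. The key observation is that the objective in~(\ref{eq:objOfflineExtSum}) with the choices (\ref{eq:onlineExpObj_fnk})--(\ref{eq:onlineExpObj_fl}) is, by construction, exactly the one-step increment of the potential
\[
\Phi(i) \;\triangleq\; \sum_{n,k} \alpha^{p_{n,k}(i)/\hat{J}} \;+\; \sum_{l} \alpha^{q_l(i)/\hat{J}}.
\]
Thus Algorithm~\ref{algMultiLine}'s greedy minimization of (\ref{eq:objOfflineExtSum}) is precisely a greedy minimization of $\Phi(i)-\Phi(i-1)$. The target is to prove $\Phi(M)\le\alpha^\beta$: since every summand in $\Phi(M)$ is itself upper bounded by $\Phi(M)$, this immediately yields $p_{n,k}(M)\le\beta\hat{J}$ and $q_l(M)\le\beta\hat{J}$ for every $(n,k)$ and $l$, which rules out the failure condition of Algorithm~\ref{algMultiLine}.

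To bound the one-step increment I would proceed in two sub-steps. First, by the greediness of Algorithm~\ref{algMultiLine} (which invokes Algorithm~\ref{algLineToTree} to find the exact minimizer when the sub-application is a simple branch, and has only the single pre-specified mapping available for a block of junction nodes), the increment $\Phi(i)-\Phi(i-1)$ is at most what one obtains by placing sub-application $i$ according to the restriction of the hypothetical offline mapping $\pi^o$. Second, writing $a^o_r(i)$ for the load that $\pi^o$ adds to a generic resource $r$ at step $i$ (where $r$ ranges over all node-resource pairs $(n,k)$ and all physical links $l$), the chain
\[
a^o_r(i)\;\le\;p^o_r(M)\;\le\;J_{\pi^o}\;\le\;\hat{J}
\]
lets me apply the tangent bound $\alpha^{y}-1\le(\alpha-1)y$ for $y\in[0,1]$. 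Factoring $\alpha^{p_r(i-1)/\hat{J}}$ out of $\alpha^{(p_r(i-1)+a^o_r(i))/\hat{J}}-\alpha^{p_r(i-1)/\hat{J}}$ and using $\alpha-1=1/\gamma$ then yields
\[
\Phi(i)-\Phi(i-1) \;\le\; \frac{1}{\gamma}\sum_{r}\alpha^{p_r(i-1)/\hat{J}}\,\frac{a^o_r(i)}{\hat{J}}.
\]

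Telescoping over $i=1,\ldots,M$, swapping the order of the two sums, and using monotonicity of $p_r(\cdot)$ together with $\sum_i a^o_r(i)=p^o_r(M)$ gives $\Phi(M)-\Phi(0)\le(1/\gamma)\sum_r\alpha^{p_r(M)/\hat{J}}\cdot p^o_r(M)/\hat{J}$. Pulling the factor $\max_r p^o_r(M)/\hat{J}\le 1$ out of the sum then collapses this bound to $\Phi(M)-\Phi(0)\le\Phi(M)/\gamma$, which rearranges to $\Phi(M)\le\Phi(0)\cdot\gamma/(\gamma-1)=(NK+L)\gamma/(\gamma-1)=\alpha^\beta$, as required. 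I expect the main obstacle to be the first sub-step above: justifying that the restriction of $\pi^o$ to each sub-application is a legitimate candidate for the inner optimization solved by Algorithm~\ref{algMultiLine}, so that the ``algorithm's increment $\le$ OPT's emulated increment'' comparison is sound. This will hinge on confirming that the simple-branch splitting preserves feasibility and that $\pi^o$ respects the same cycle-free and junction-node-placement constraints that the algorithm enforces, which should follow from the construction but needs to be verified carefully.
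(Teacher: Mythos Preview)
Your argument is correct, but it differs from the paper's in the choice of potential. The paper works with the \emph{compound} potential
\[
\Phi(i)\;=\;\sum_{r}\alpha^{\tilde z_r(i)}\bigl(\gamma-\tilde z^{\,o}_r(i)\bigr),
\]
where $\tilde z_r(i)$ is the algorithm's normalized load and $\tilde z^{\,o}_r(i)$ is the offline mapping's normalized load, and shows directly that $\Phi(i)-\Phi(i-1)\le 0$ for every $i$. The non-increase is obtained from exactly the same two ingredients you use (the greedy comparison against $\pi^{o}$, and the inequality $\gamma(\alpha^{w}-1)\le w$ for $w\in[0,1]$, which is your tangent bound after multiplying by $\gamma$), but the offline term built into the potential absorbs what in your argument becomes the telescope-and-swap step. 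From $\Phi(i)\le\Phi(0)=\gamma(NK+L)$ and $\tilde z^{\,o}_r(i)\le 1$ the paper reads off $(\gamma-1)\alpha^{\max_r\tilde z_r(i)}\le\gamma(NK+L)$, hence $\max_r\tilde z_r(i)\le\beta$.

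Your simpler potential plus telescoping is an equally standard route and gives the same constant. One small point: as written you only bound $\Phi(M)$, whereas ``never fails'' requires the bound at every intermediate step. Your argument works verbatim with $M$ replaced by any $i$ (since $\sum_{j\le i}a^{o}_r(j)\le p^{o}_r(M)\le\hat J$), so you should state that explicitly. The paper's compound potential handles this automatically because the one-step monotonicity holds for every $i$.
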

\begin{proof} 
See Appendix \ref{app:ProofGraphPlacementApproxBound}.
\end{proof}

Proposition \ref{prop:Bound} guarantees a bound for the cost resulting
from Algorithm \ref{algMultiLine}. We note that the optimal offline
mapping $\pi^{o*}$ produces cost $J_{\pi^{o*}}$, which is smaller
than or equal to the cost of an arbitrary offline mapping. It follows that for any $\pi^{o}$, we have $J_{\pi^{o*}}\leq J_{\pi^{o}}$.
This means that if there exists $\pi^{o}$ such that $J_{\pi^{o}}\leq\hat{J}$,
then we must have $J_{\pi^{o*}}\leq\hat{J}$. If we can set $\hat{J}=J_{\pi^{o*}}$,
then from Proposition~\ref{prop:Bound} we have $\max\left\{ \max_{k,n}{p}_{n,k}(M);\max_{l}{q}_{l}(M)\right\} \leq\beta J_{\pi^{o*}}$,
which means that the competitive ratio is $\beta$.

\textbf{How to Determine the Reference Cost $\hat{J}$:} Because the value of $J_{\pi^{o*}}$ is unknown, we cannot
always set $\hat{J}$ exactly to $J_{\pi^{o*}}$. Instead, we need
to set $\hat{J}$ to an estimated value that is not too far from $J_{\pi^{o*}}$.
We achieve this by using the doubling technique, which is widely used
in online approximation algorithms. The idea 
is to double the value of $\hat{J}$ every time  Algorithm \ref{algMultiLine}
fails. After each doubling, we ignore all the previous placements when calculating the objective function
(\ref{eq:objOfflineExtSum}) with (\ref{eq:onlineExpObj_fnk}) and (\ref{eq:onlineExpObj_fl}),
i.e., we assume that there is no existing application,
and we place the subsequent applications (including the one that has
failed with previous value of $\hat{J}$) with the new value of $\hat{J}$.
At initialization, the value of $\hat{J}$ is set to a reasonably
small number $\hat{J}_{0}$.

In Algorithm \ref{algDoubling}, we summarize the high-level procedure
that includes the splitting of the application graph, the calling of Algorithm
\ref{algMultiLine}, and the doubling process, with multiple application
graphs that arrive over time.

\begin{algorithm}
\caption{High-level procedure for multiple arriving tree application graphs}
\label{algDoubling}
{
\begin{algorithmic}[1]
\STATE
Initialize $\hat{J}\leftarrow\hat{J}_{0}$
\STATE
Define index $i$ as the application index, which automatically increases
by 1 for each new application (after splitting)
\STATE
Initialize $i\leftarrow 1$
\STATE
Initialize $i_{0}\leftarrow 1$
\LOOP
\IF {new application graph has arrived}
\STATE
Split the application graph into simple branches and a
set of nodes with given placement, assume that each of them constitute an application
\FORALL {application $i$}
\REPEAT
\STATE Call Algorithm \ref{algMultiLine}
for application $i$ with ${p}{}_{n,k}(i-1)=\max\left\{ 0,{p}{}_{n,k}(i-1)-{p}{}_{n,k}(i_{0}-1)\right\} $ and ${q}{}_{l}(i-1)=\max\left\{ 0,{q}{}_{l}(i-1)-{q}{}_{l}(i_{0}-1)\right\} $
\IF {Algorithm \ref{algMultiLine} returns FAIL}
\STATE Set $\hat{J}\leftarrow 2\hat{J}$
\STATE Set $i_{0}\leftarrow i$
\ENDIF
\UNTIL {Algorithm \ref{algMultiLine} does not return FAIL}
\STATE Map application $i$ according to $\pi_{i}$
resulting from Algorithm \ref{algMultiLine}
\ENDFOR
\ENDIF
\ENDLOOP
\end{algorithmic}
}
\end{algorithm}

\subsubsection{Complexity and Competitive Ratio}

\label{sec:online-fixed-compratio} In the following, we discuss the complexity and competitive ratio of Algorithm~\ref{algDoubling}.

Because the value of $J_{\pi^{o*}}$ is finite%
\footnote{The value of $J_{\pi^{o*}}$ is finite unless the placement cost specification does not allow any placement with finite cost.
We do not consider this case here because it means that the placement
is not realizable under the said constraints. In practice, the algorithm
can simply reject such application graphs when the mapping cost resulting
from Algorithm \ref{algMultiLine} is infinity, regardless of what
value of $\hat{J}$ has been chosen.%
}, the doubling procedure in Algorithm~\ref{algDoubling} only contains
finite steps. The remaining part of the algorithm mainly consists of
calling Algorithm~\ref{algMultiLine} which then calls Algorithm~\ref{algLineToTree} for each simple branch. Because nodes and links in each simple branch together with the set
of nodes with given placement add up to the whole
application graph, similar to Algorithm~\ref{algLineToTree}, the \emph{time-complexity of Algorithm~\ref{algDoubling} is  $O(V{}^{3}N{}^{2})$}
for each application graph arrival.

Similarly, when combining the procedures in Algorithms~\ref{algLineToTree}--\ref{algDoubling}, we can see that the  \emph{space-complexity of Algorithm~\ref{algDoubling} is  $O(VN(V+N))$}
for each application graph arrival, which is in the same order as Algorithm~\ref{algLineToTree}.

For the competitive ratio, we have the following result. 

\begin{proposition}\label{prop:competRatio} \textbf{(Competitive
Ratio)}: Algorithm \ref{algDoubling} is $4\beta=4\log_{\alpha}\left(\frac{\gamma(NK+L)}{\gamma-1}\right)$-competitive.
\end{proposition}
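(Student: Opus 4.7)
The plan is to couple Proposition~\ref{prop:Bound}, applied once per doubling phase, with a geometric-series summation driven by the doubling schedule in Algorithm~\ref{algDoubling}. Let the successive values taken by $\hat{J}$ throughout the run be $\hat{J}_0, 2\hat{J}_0, \ldots, 2^T \hat{J}_0$, where $2^T \hat{J}_0$ is the final value after which Algorithm~\ref{algMultiLine} no longer returns FAIL. First I would observe that, by Proposition~\ref{prop:Bound}, any $\hat{J} \geq J_{\pi^{o*}}$ is immune to failure, so the doubling loop must terminate; moreover, the last jump from $2^{T-1}\hat{J}_0$ to $2^T \hat{J}_0$ was triggered by a genuine FAIL, which forces $2^{T-1}\hat{J}_0 < J_{\pi^{o*}}$ and hence $2^T \hat{J}_0 < 2 J_{\pi^{o*}}$.

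Next I would bound the real cost accumulated during each phase. Phase $j$ begins immediately after the assignment $i_0 \leftarrow i$, so the offset-adjusted counters $p_{n,k}(i-1) - p_{n,k}(i_0-1)$ and $q_l(i-1) - q_l(i_0-1)$ handed to Algorithm~\ref{algMultiLine} start at zero. Within the phase, Algorithm~\ref{algMultiLine} is therefore running on precisely the setup of Proposition~\ref{prop:Bound} with reference cost $2^j \hat{J}_0$, and the FAIL guard ensures that the offset counters never exceed $\beta \cdot 2^j \hat{J}_0$ at any physical node or link: any placement that would overshoot is rejected and deferred to the next phase. Consequently, the true incremental cost accrued during phase $j$ on any physical node or link is at most $\beta \cdot 2^j \hat{J}_0$.

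Summing over the $T+1$ phases, the final true cost on every physical node and link satisfies
\begin{equation*}
\max\!\left\{\max_{k,n} p_{n,k}(M),\; \max_l q_l(M)\right\}
\;\leq\; \sum_{j=0}^{T} \beta \cdot 2^j \hat{J}_0
\;=\; \beta \hat{J}_0 (2^{T+1}-1)
\;<\; 2\beta \cdot 2^T \hat{J}_0
\;<\; 4\beta J_{\pi^{o*}},
\end{equation*}
which with $\beta = \log_\alpha\!\left(\tfrac{\gamma(NK+L)}{\gamma-1}\right)$ is exactly the claimed $4\beta$-competitive bound.

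The delicate step, and the one I would spend the most care on, is justifying that Proposition~\ref{prop:Bound}, which was proved for a single-pass online run starting from empty counters, carries over \emph{verbatim} to each doubling phase. The justification is that the offset subtraction in Algorithm~\ref{algDoubling} reproduces an empty initial state from Algorithm~\ref{algMultiLine}'s viewpoint, so its in-phase behavior is indistinguishable from a fresh run, and Proposition~\ref{prop:Bound} can be invoked as-is. The remaining algebra (the geometric sum and the inequality $2^T \hat{J}_0 < 2 J_{\pi^{o*}}$) is routine; the degenerate case $T=0$ is covered by the same bound under the standing convention that $\hat{J}_0$ is initialized small compared to any realistic $J_{\pi^{o*}}$.
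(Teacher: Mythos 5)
Your proposal is correct and follows essentially the same route as the paper's proof: a FAIL implies (via Proposition~\ref{prop:Bound}) that $J_{\pi^{o*}}$ exceeds the current $\hat{J}$, so the final reference value is below $2J_{\pi^{o*}}$; each doubling phase contributes at most $\beta\hat{J}$ to any node or link cost because the counters are reset and the FAIL guard caps the phase; and the geometric sum over phases yields the factor $4\beta$. The only difference is presentational (you sum the phases forward from $\hat{J}_0$, the paper sums the series backward from $\hat{J}_f$), which changes nothing.
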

\begin{proof} 
If Algorithm \ref{algMultiLine} fails,
we know that $J_{\pi^{o*}}>\hat{J}$ according to Proposition
\ref{prop:Bound}. Hence, by doubling the value of $\hat{J}$ each
time  Algorithm \ref{algMultiLine} fails, we have $\hat{J}_{f}<2J_{\pi^{o*}}$,
where $\hat{J}_{f}$ is the final value of $\hat{J}$ after placing
all $M$ applications. Because we ignore all previous placements and
only consider the applications $i_{0},...,i$ for a particular value
of $\hat{J}$, it follows that 
\begin{eqnarray}
\max &  & \!\!\!\!\!\!\!\!\!\Big\{ \max_{k,n}\{{p}_{n,k}(i)-{p}_{n,k}(i_{0}-1)\}; \nonumber \\
 &  & \max_{l}\{{q}_{l}(i)-{q}_{l}(i_{0}-1)\}\Big\} \leq\beta\hat{J}\label{eq:doublingProof}
\end{eqnarray}
for the particular value of $\hat{J}$.

When we consider all the placements of $M$ applications, by summing
up (\ref{eq:doublingProof}) for all values of $\hat{J}$, we have
\begin{align*}
 \max & \!\left\{\! \max_{k,n}{p}_{n,k}(M);\max_{l}{q}_{l}(M)\!\right\} 
 \leq \left(\!1\!+\!\frac{1}{2}\!+\!\frac{1}{4}\!+\!\frac{1}{8}\!+\!\cdots\!\right)\!\beta\hat{J}_{f}\\
 & < 2\left(1+\frac{1}{2}+\frac{1}{4}+\frac{1}{8}+\cdots\right)\beta J_{\pi^{o*}}
 = 4\beta J_{\pi^{o*}}.
\end{align*}
Hence, the proposition follows.
\end{proof}

The variables $\alpha$, $\gamma$ and $K$ are constants, and $L=N-1$
because the physical graph is a tree. Hence, the competitive ratio
of Algorithm \ref{algDoubling} can also be written as $O(\log N)$.

It is also worth noting that, for each application graph, we can have
different tree physical graphs that are extracted from a general physical
graph, and the above conclusions still hold.

\subsection{When at Least One Junction Node Placement Is Not Given}

\label{sec:UnfixedDegree3}

In this subsection, we focus on cases where the placements of some
or all junction nodes are not given.
For such scenarios, we first extend our concept of branches to incorporate some unplaced junction nodes. The basic idea is
that each \emph{general branch }is the largest subset of nodes and
edges that are interconnected with each other not including any of the nodes with pre-specified placement, but (as with our previous definition of simple branches)
the subset includes the edges connected to placed nodes. A simple branch (see definition in Section \ref{sec:FixedDegree3})
is always a general branch, but a general branch may or may not be a simple branch. Examples of general
branches are shown in Fig. \ref{fig:exampleUnfixedDegree3}.

\begin{figure}
\center{\includegraphics[width=0.9\columnwidth]{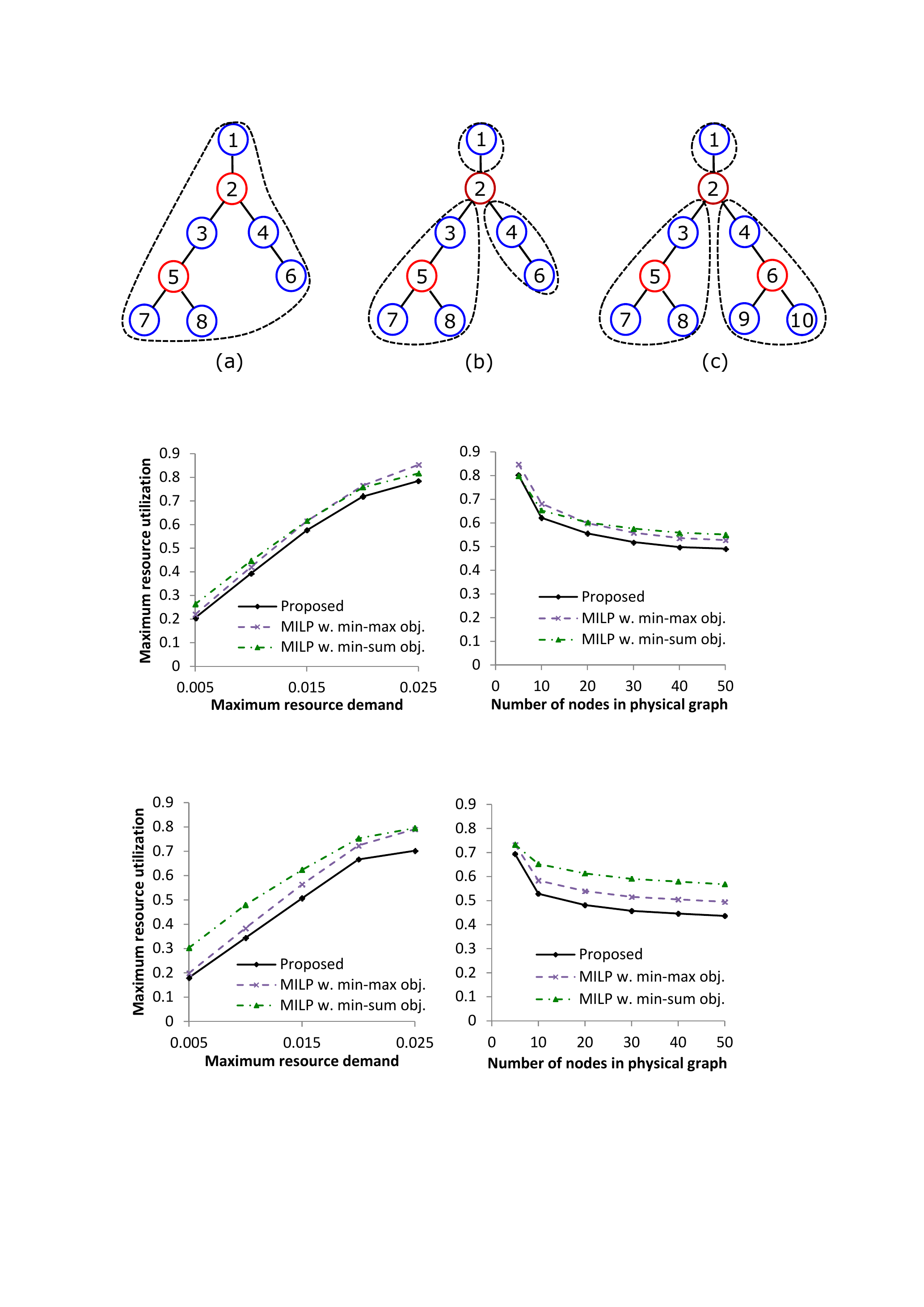}}

\caption{Example of application graphs with some unplaced junction nodes, the nodes and edges within each dashed boundary
form a general branch: (a)
nodes $2$ and $5$ are both unplaced, (b) node $2$ is placed, node $5$ is unplaced, (c) node $2$ is
placed, nodes $5$ and $6$ are unplaced.}

\label{fig:exampleUnfixedDegree3} 
\end{figure}

\subsubsection{Algorithm Design}

\label{sec:unfixed-algorithm} The main idea behind the algorithm
is to combine Algorithm \ref{algMultiLine} with the enumeration of
possible placements of unplaced junction nodes. When
there is only a constant number of such nodes on any path from the root to a leaf, the algorithm
remains polynomial in time-complexity while guaranteeing a polynomial-logarithmic (poly-log)
competitive ratio.

To illustrate the intuition, consider the example application graph shown in Fig. \ref{fig:exampleUnfixedDegree3}(a),
where nodes $2$ and $5$ are both initially unplaced.
We follow a hierarchical determination of the placement of unplaced nodes
starting with the nodes in the deepest level. For the example in Fig.
\ref{fig:exampleUnfixedDegree3}(a), we first determine the placement
of node $5$, given each possible placement of node $2$; then determine the placement of node $2$. Recall that
we use the cost function in (\ref{eq:objOfflineExtSum}) with (\ref{eq:onlineExpObj_fnk}) and (\ref{eq:onlineExpObj_fl}) to determine
the placement of each simple branch when all the junction nodes are placed. We use the same cost function (with slightly modified parameters) for the placement of nodes
$2$ and $5$. However, when determining the placement of node $5$, we regard
the general branch that includes node $5$ (which contains nodes $3$,
$5$, $7$, and $8$ and the corresponding edges as shown in Fig. \ref{fig:exampleUnfixedDegree3}(b))
as one single application, i.e. the values of ${p}{}_{n,k}(i-1)$
and ${q}{}_{l}(i-1)$ in (\ref{eq:onlineExpObj_fnk}) and (\ref{eq:onlineExpObj_fl}) correspond
to the resource utilization at nodes and links before placing this whole general branch,
and the application $i$ contains all the nodes and edges in this general
branch. Similarly, when determining the placement of node $2$, we consider
the whole application graph as a single application.

It is worth noting that in many cases we may not need to enumerate
all the possible combinations of the placement of unplaced
junction nodes. For example, in Fig. \ref{fig:exampleUnfixedDegree3}(c),
when the placement of node $2$ is given, the placement of nodes $5$ and
$6$ does not impose additional restrictions upon each
other (i.e., the placement of node
$5$ does not affect where node $6$ can be placed, for instance). Hence, the general branches that respectively include node
$5$ and node $6$ can be placed in a subsequent order using the online
algorithm.

Based on the above examples, we summarize the procedure
as Algorithm \ref{algTreeToTreeNotFixedSub}, where we solve the problem
recursively and determine the placement of one junction node that has not been placed before in each instance of the function Unplaced($v,h$).
The parameter
$v$ is initially set to the top-most unplaced junction node (node $2$ in Fig. \ref{fig:exampleUnfixedDegree3}(a)), and
$h$ is initially set to $H$ (the maximum number of unplaced junction nodes on any path from the root to a leaf in the application
graph).

\begin{algorithm}
\caption{Tree-to-tree placement when some junction nodes are not placed}
\label{algTreeToTreeNotFixedSub}
{
\begin{algorithmic}[1]
\STATE \textbf{function} Unplaced($v,h$)
\STATE Given the $i$th application that is a general branch, tree
physical graph $\mathcal{{Y}}$, $\hat{J}$, and $\beta$
\STATE Given ${p}{}_{n,k}(i-1)$ and ${q}{}_{l}(i-1)$
which is the current resource utilization on nodes and links 
\STATE Define $\mathbf{\Pi}$ to keep the currently obtained mappings, its entry $\pi|_{\pi(v)=n_{0}}$ for all $n_{0}$ represents the mapping, given that $v$ is mapped to $n_{0}$ \label{algTreeToTreeNotFixedSub:line:varDef1}
\STATE Define ${p}{}_{n,k}(i)|_{\pi(v)=n_{0}}$ and ${q}{}_{l}(i)|_{\pi(v)=n_{0}}$ for all $n_0$ as the resource utilization after placing the $i$th application, given that $v$ is mapped to $n_{0}$ \label{algTreeToTreeNotFixedSub:line:varDef2}
\STATE Initialize ${p}{}_{n,k}(i)|_{\pi(v)=n_{0}}\leftarrow {p}{}_{n,k}(i-1)$ and ${q}{}_{l}(i)|_{\pi(v)=n_{0}}\leftarrow {q}{}_{l}(i-1)$ for all $n_0$ \label{algTreeToTreeNotFixedSub:line:varDef3}
\FORALL {$n_{0}$ that $v$ can be mapped to}
\STATE Assume $v$ is placed at $n_{0}$
\FORALL {general branch that is connected
with $v$}
\IF {the general branch contains
unplaced junction nodes}
\STATE Find the top-most unplaced vertex
$v'$ within this general branch
\STATE Call Unplaced($v',h-1$)
while assuming $v$ is placed at $n_{0}$, and with ${p}{}_{n,k}(i-1)={p}{}_{n,k}(i)|_{\pi(v)=n_{0}}$
and ${q}{}_{l}(i-1)={q}{}_{l}(i)|_{\pi(v)=n_{0}}$
\ELSE 
\STATE (in which case the general
branch is a simple branch without unplaced junction nodes) \\ Run Algorithm \ref{algMultiLine}
for this branch
\ENDIF

\STATE Put mappings resulting from
Unplaced($v',h-1$) or Algorithm \ref{algMultiLine} into
$\pi|_{\pi(v)=n_{0}}$
\STATE Update ${p}{}_{n,k}(i)|_{\pi(v)=n_{0}}$
and ${q}{}_{l}(i)|_{\pi(v)=n_{0}}$ to incorporate new mappings
\ENDFOR
\ENDFOR
\STATE Find $\min_{n_{0}}\!\sum_{k,n}\!\!\left(\!\exp_\alpha\!\!\left(\!{\frac{{p}{}_{n,k}(i)|_{\pi(v)=n_{0}}}{\beta^{h}\hat{J}}}\!\right)\!\!-\!\exp_\alpha\!\!\left(\!{\frac{{p}{}_{n,k}(i-1)}{\beta^{h}\hat{J}}}\!\right)\!\right)\! +\sum_{l}\left(\exp_\alpha\left({\frac{{q}{}_{l}(i)|_{\pi(v)=n_{0}}}{\beta^{h}\hat{J}}}\right)-\exp_\alpha\left({\frac{{q}{}_{l}(i-1)}{\beta^{h}\hat{J}}}\right)\right)$, \label{algUnfixedMinLine} \\
returning the
optimal placement of $v$ as $n_{0}^{*}$.
\IF { $h=H$ and (\,$\exists n,k:{p}{}_{n,k}(i)|_{\pi(v)=n_{0}^{*}}>\beta^{1+H}\hat{J}$ or $\exists l:{q}{}_{l}(i)|_{\pi(v)=n_{0}^{*}}>\beta^{1+H}\hat{J}$\,)}
\STATE \textbf{return} FAIL
\ELSE 
\STATE \textbf{return} $\pi|_{\pi(v)=n_{0}^{*}}$
\ENDIF
\end{algorithmic}
}
\end{algorithm}

Algorithm \ref{algTreeToTreeNotFixedSub} can be embedded into Algorithm
\ref{algDoubling} to handle multiple arriving application graphs
and unknown reference cost $\hat{J}$. The only part that needs to be modified in Algorithm~\ref{algDoubling}
is that it now splits the whole application graph into general branches
(rather than simple branches without unplaced junction nodes), and it either calls Algorithm \ref{algMultiLine} or Algorithm
\ref{algTreeToTreeNotFixedSub} depending on whether there are unplaced
junction nodes in the corresponding general branch.
When there are such nodes, it calls Unplaced($v,h$) with
the aforementioned initialization parameters.


\subsubsection{Complexity and Competitive Ratio}
\label{sec:online-unfixed-competRatio}

The \emph{time-complexity of Algorithm \ref{algTreeToTreeNotFixedSub}} together
with its high-level procedure that is a modified version of Algorithm
\ref{algDoubling} is $O(V{}^{3}N{}^{2+H})$ for each application
graph arrival, as explained below. Note that $H$ is generally not the total number of unplaced nodes.

Obviously, when $H=0$, the time-complexity is the same as the
case where all junction nodes are placed beforehand.  When there is only one unplaced junction node (in which case $H=1$), Algorithm~\ref{algTreeToTreeNotFixedSub}
considers all possible placements for this vertex, which has at most
$N$ choices. Hence, its time-complexity becomes $N$ times the time-complexity
with all placed junction nodes. When there are multiple unplaced
junction nodes, we can see from Algorithm \ref{algTreeToTreeNotFixedSub}
that it only increases its recursion depth when some lower
level unplaced junction nodes exist. In other words, parallel general
branches (such as the two general branches that respectively include
node $5$ and node $6$ in Fig. \ref{fig:exampleUnfixedDegree3}(c)) do
not increase the recursion depth, because the function Unplaced($v,h$)
for these general branches is called in a sequential order. Therefore, the
time-complexity depends on the maximum recursion depth which is $H$; thus,
the overall time-complexity is $O(V{}^{3}N{}^{2+H})$.

The \emph{space-complexity of Algorithm~\ref{algTreeToTreeNotFixedSub} is  $O(VN^{1+H}(V+N))$} for each application graph arrival, because in every recursion, the results for all possible placements of $v$ are stored, and there are at most $N$ such placements for each junction node.


Regarding the competitive ratio, similar to Proposition~\ref{prop:Bound},
we can obtain the following result.

\begin{proposition}\label{prop:BoundUnfixed}If there exists an offline
mapping $\pi^{o}$ that considers all $M$ application graphs and brings cost
$J_{\pi^{o}}$, such that $J_{\pi^{o}}\leq\hat{J}$, then Algorithm
\ref{algTreeToTreeNotFixedSub} never fails, i.e., ${p}_{n,k}(M)$
and ${q}_{l}(M)$ resulting from Algorithm \ref{algTreeToTreeNotFixedSub}
never exceeds $\beta^{1+H}\hat{J}$.
\end{proposition}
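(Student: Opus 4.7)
I would prove Proposition \ref{prop:BoundUnfixed} by induction on $H$, the maximum number of unplaced junction nodes on any root-to-leaf path of the application graph. The base case $H=0$ is exactly the setting of Proposition \ref{prop:Bound}: Algorithm \ref{algTreeToTreeNotFixedSub} never recurses, every general branch is a simple branch, so each is handled by Algorithm \ref{algMultiLine} and the bound $\beta^{1+0}\hat{J} = \beta\hat{J}$ follows from Proposition \ref{prop:Bound} applied to the sequence of simple-branch sub-applications produced by the modified Algorithm \ref{algDoubling}.

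For the inductive step, assume the claim holds (with bound $\beta^{h}\hat{J}$) whenever at most $h-1$ unplaced junction nodes appear on any root-to-leaf path, and consider a graph with $H$ such nodes. The key structural observation is that once the top-most unplaced junction node $v$ is tentatively placed at some $n_0$, each general branch hanging off $v$ contains at most $H-1$ unplaced junction nodes on any of its root-to-leaf paths. Hence the inductive hypothesis applies inside each inner call Unplaced$(v',H-1)$, guaranteeing that the incremental node and link costs accrued within that subtree are bounded by $\beta^{H}\hat{J}$ \emph{relative to the reference cost used at depth $H-1$}. I would then view the outer loop over $n_0$ as itself an instance of the online-placement problem analysed in Proposition \ref{prop:Bound}, but with the reference cost rescaled from $\hat{J}$ to $\beta^{H}\hat{J}$ — which is exactly the denominator that appears in line \ref{algUnfixedMinLine} of Algorithm \ref{algTreeToTreeNotFixedSub}.

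Concretely, I would replay the potential-function argument of Appendix \ref{app:ProofGraphPlacementApproxBound} with $\hat{J}$ replaced by $\beta^{H}\hat{J}$. The quantity being greedily minimised in line \ref{algUnfixedMinLine} is precisely the incremental potential $\sum_{n,k}\Delta\exp_\alpha(p_{n,k}/(\beta^{H}\hat{J})) + \sum_{l}\Delta\exp_\alpha(q_{l}/(\beta^{H}\hat{J}))$; since the offline optimum $\pi^{o}$ satisfies $J_{\pi^{o}}\le\hat{J}\le\beta^{H}\hat{J}$, the same telescoping bound used in Proposition \ref{prop:Bound} yields that the total potential never exceeds $(NK+L)\alpha/(\alpha-1)$, which by monotonicity of $\exp_\alpha$ forces every individual $p_{n,k}(M)$ and $q_{l}(M)$ to stay below $\beta\cdot\beta^{H}\hat{J}=\beta^{1+H}\hat{J}$. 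Combining this outer bound with the inductive bound on each inner subtree completes the induction.

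The main obstacle will be the bookkeeping across recursion levels: I must verify that the rescaling $\hat{J}\mapsto\beta^{h}\hat{J}$ used at depth $h$ of line \ref{algUnfixedMinLine} simultaneously (i) makes the greedy step at depth $h$ feasible whenever the offline optimum is feasible at scale $\beta^{h}\hat{J}$, so that Proposition \ref{prop:Bound}'s argument applies verbatim one level at a time, and (ii) composes cleanly, in the sense that the cost reported by the inner call to Unplaced$(v',H-1)$ — which is at most $\beta^{H}\hat{J}$ — can be treated by the outer call as a legitimate ``sub-application cost'' on top of the pre-existing ${p}_{n,k}(i-1)$ and ${q}_{l}(i-1)$. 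A secondary subtlety is that the enumeration over $n_{0}$ at the top level must, in particular, consider the placement used by $\pi^{o}$ for $v$; this ensures that the ``there exists a feasible offline continuation'' hypothesis needed to invoke Proposition \ref{prop:Bound}-style reasoning carries down into the recursion. Once these two points are pinned down, the final inequality $\beta^{1+H}\hat{J}$ follows directly.
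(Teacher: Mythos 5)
Your proposal is correct and follows essentially the same route as the paper: the paper also proceeds level by level (explicitly for $H=1$, then by the same propagation for $H>1$), bounding the inner placements with Proposition \ref{prop:Bound} at scale $\hat{J}$ and then reapplying Proposition \ref{prop:Bound}'s potential argument to the choice of the junction node's placement with $\hat{J}$ replaced by $\beta^{h}\hat{J}$, using precisely your observation that the candidate set enumerated over $n_{0}$ contains the placement used by $\pi^{o}$, so a reference mapping of cost at most the rescaled $\hat{J}$ lies inside the set the minimization in Line \ref{algUnfixedMinLine} ranges over. Only a cosmetic slip remains (the initial potential is $\gamma(NK+L)=(NK+L)/(\alpha-1)$, not $(NK+L)\alpha/(\alpha-1)$), which does not affect the final bound $\beta^{1+H}\hat{J}$.
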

\begin{proof}
When $H=0$, the claim is the
same as Proposition \ref{prop:Bound}.
When $H=1$, there is at most one unplaced junction node in each general branch. Because Algorithm~\ref{algTreeToTreeNotFixedSub}
operates on each general branch, we can regard that we have only
one unplaced junction node when running Algorithm~\ref{algTreeToTreeNotFixedSub}.
In this case, there is no recursive calling of Unplaced($v,h$).
Recall that $v$ is the top-most unplaced junction node. The function Unplaced($v,h$) first fixes the placement of  $v$ to a particular physical node $n_{0}$, and finds the placement of
the remaining nodes excluding $v$. It then finds the placement of $v$.

From Proposition \ref{prop:Bound}, we know that when we fix the placement
of $v$, the cost resulting from the algorithm never exceeds
$\beta\hat{J}$ if there exists a mapping $\pi^{o}|_{\pi(v)=n_0}$ (under the constraint
that $v$ is placed at $n_{0}$) that brings cost 
$J_{\pi^{o}|_{\pi(v)=n_0}}\leq\hat{J}$. 

To find the placement of $v$, Algorithm~\ref{algTreeToTreeNotFixedSub}
finds the minimum cost placement from the set of placements that have been obtained
when the placement of $v$ is given. 
Reapplying Proposition~\ref{prop:Bound} for the placement of $v$, by substituting $\hat{J}$
with $\beta\hat{J}$, we know that the cost from the algorithm
never exceeds $\beta^{2}\hat{J}$, provided that there exists a mapping, which
is within the set of mappings produced by the algorithm with given
$v$ placements\footnote{Note that, as shown in Line \ref{algUnfixedMinLine} of Algorithm \ref{algTreeToTreeNotFixedSub},  to determine the placement of $v$, we only take the minimum  cost (expressed as the difference of exponential functions) with respect to those mappings that were obtained with given placement of $v$. It follows that the minimization is only taken among a subset of all the possible mappings. This restricts the reference mapping to be within the set of mappings that the minimization operator operates on. Because, only in this way, the inequality (\ref{app:ProofGraphPlacementApproxBound:eq:potentialChange3}) in the proof of Proposition \ref{prop:Bound} can be satisfied. On the contrary, Algorithm \ref{algMultiLine} considers all possible mappings that a particular simple branch can be mapped to, by calling Algorithm \ref{algLineToTree} as its subroutine.}, that has a cost not exceeding $\beta\hat{J}$. Such
a mapping exists and can be produced by the algorithm if there exists
an offline mapping $\pi^{o}$ (thus a mapping $\pi^{o}|_{\pi(v)=n_0}$ for a particular placement of $v$) that brings cost $J_{\pi^{o}}$ with
$J_{\pi^{o}}\leq\hat{J}$. Hence, the claim follows for $H=1$.

When $H>1$, because we decrease the value of $h$ by one every time
we recursively call Unplaced($v,h$), the same propagation
principle of the bound applies as for the case with $H=1$. Hence,
the claim follows.
\end{proof}

Using the same reasoning as for Proposition \ref{prop:competRatio},
it follows that \emph{Algorithm \ref{algTreeToTreeNotFixedSub} in combination
with the extended version of Algorithm \ref{algDoubling} is $4\beta^{1+H}=4\log_{\alpha}^{1+H}\left(\frac{\gamma(NK+L)}{\gamma-1}\right)$-competitive, thus its competitive ratio is $O(\log^{1+H}N)$ }.

\section{Numerical Evaluation}
\label{sec:SimulationResults}
We compare the proposed algorithm against two heuristic approaches via simulation.
The first approach is one that greedily minimizes the maximum resource utilization (according to (\ref{eq:objOnline})) for the placement of \emph{every} newly arrived application graph. The second approach is the Vineyard algorithm proposed in \cite{refViNEYard}, where load balancing is also considered as a main goal in application placement. 

\begin{figure*}
\center{\includegraphics[width=1.35\columnwidth]{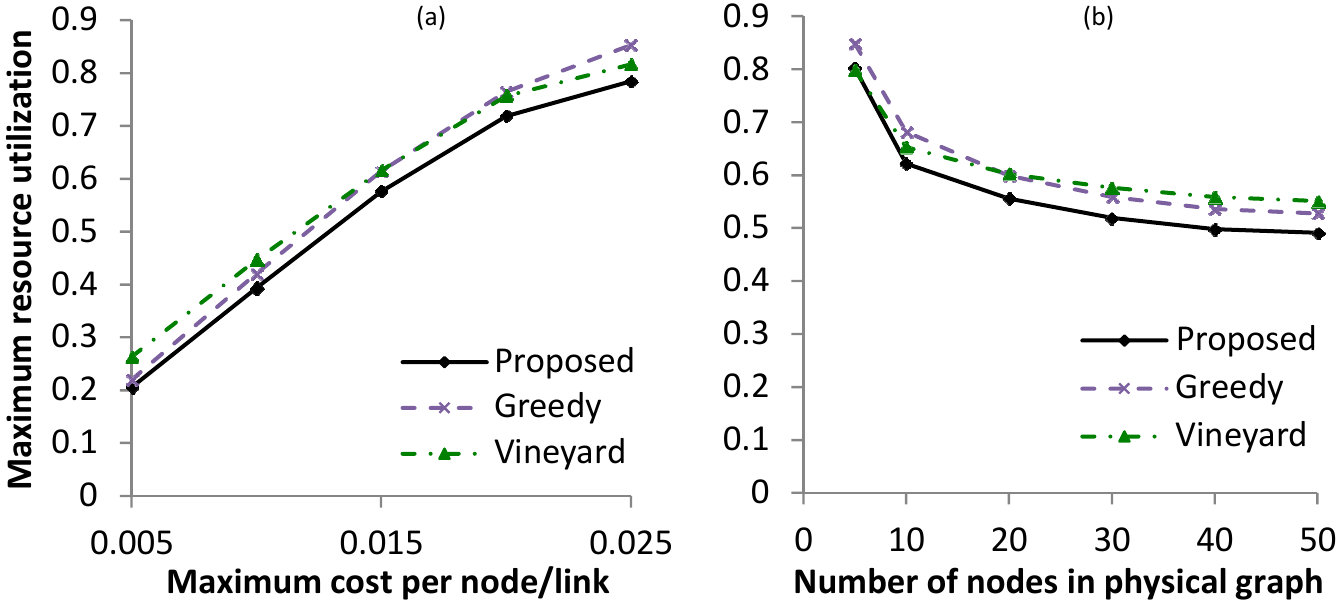}}
\caption{Maximum resource utilization when junction node placements are pre-specified.}
\label{fig:simFixed} 
\end{figure*}

\begin{figure*}
\center{\includegraphics[width=1.35\columnwidth]{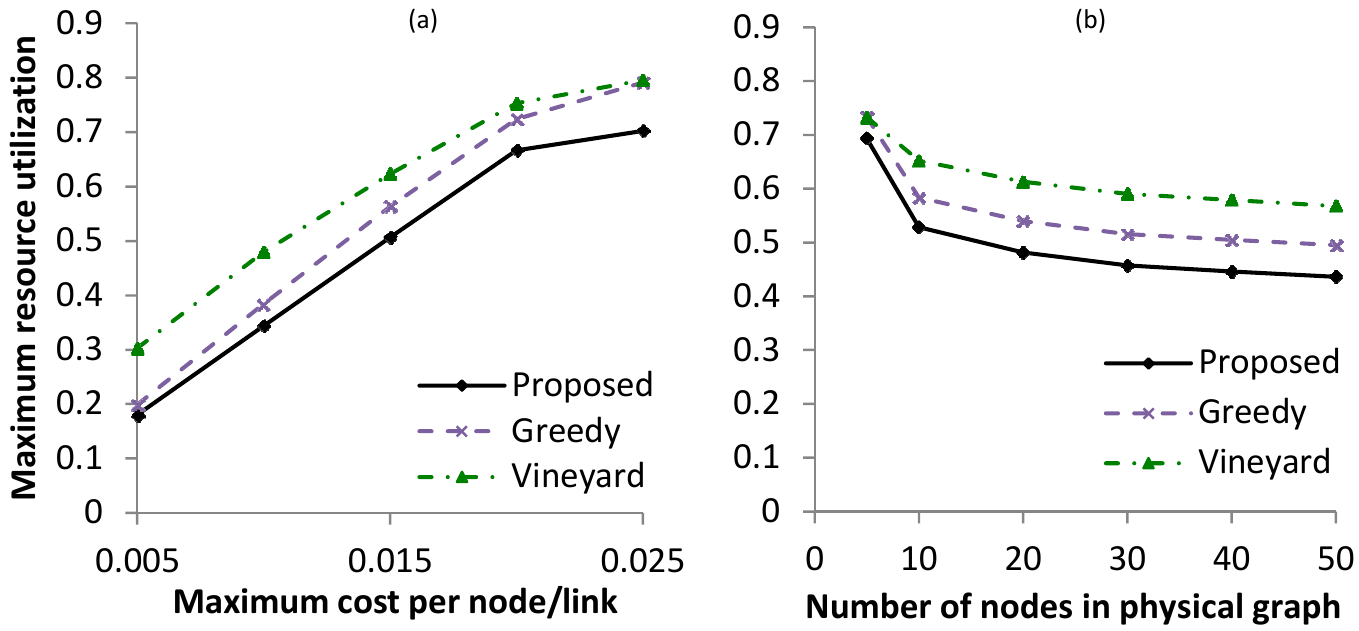}}
\caption{Maximum resource utilization when junction node placements are not pre-specified.}
\label{fig:simUnfixed} 
\end{figure*}

Both the greedy and Vineyard algorithms require an optimization problem to be solved as a subroutine, for the placement of every newly arrived application. This optimization problem can be expressed as a mixed-integer linear program (MILP). MILPs are generally \emph{not} solvable in polynomial-time, thus an LP-relaxation and rounding procedure is used in \cite{refViNEYard}.
In this paper, to capture the best generality and eliminate inaccuracies caused by heuristic rounding mechanisms (because there are multiple ways of rounding that one could use), we solve the MILP subroutines directly using CPLEX \cite{IBMCPLEX}.
This gives an exact solution to the subroutine, thus the greedy and Vineyard algorithms in the simulation may perform better than they would in reality, and we are conservative in showing the effectiveness of the proposed algorithm. 

Note that these MILP solutions do \emph{not} represent the optimal offline solution, because an optimal offline solution needs to consider all application graphs at the same time, whereas the methods that we use for comparison only solve the MILP subroutine for each newly arrived application. Obtaining the optimal offline solution requires excessive computational time such that the simulation infeasible, hence we do not consider it here. 
We also do not compare against the theoretical approach in  \cite{PODC2011} via simulation, because that approach is non-straightforward to implement. However, we have outlined the benefits of our approach against \cite{PODC2011} in Section \ref{sub:contributionsInIntro} and some further discussion will be given in Section \ref{sec:discussion}.

To take into account possible negative impacts of the cycle-free restriction in the proposed algorithm, we do \emph{not} impose the cycle-free constraint in the baseline greedy and Vineyard algorithms. However, for a fair comparison, we do require in the baseline approaches that when the placements of junction nodes are given, the children of this junction node can only be placed onto the physical node on which the junction node has been placed, or onto the children of this physical node. 


Because MEC is a very new concept which has not been practically deployed in a reasonably large scale, we currently do not have real topologies available to us for evaluation. Therefore, similar to existing work such as \cite{refViNEYard}, we consider synthetic tree application and physical graphs.
Such graphs mimic realistic MEC setups where MEC servers and applications locate at multiple network equipments in different hierarchical levels, see \cite{ETSIWhitepaper} and the example in Section \ref{sec:intro} for instance.
The number of application nodes for each application is randomly chosen from the interval $[3,10]$, and the number of physical nodes ranges from $2$ to $50$. This simulation setting is similar to that in \cite{refViNEYard}. We use a sequential approach to assign connections between nodes. Namely, we first label the nodes with indices. Then, we start from the lowest index, and connect each node $m$ to those nodes that have indices $1,2,...,m-1$. Node $m$ connects to node $m-1$ with probability $0.7$, and connects to nodes $1,2,...,m-2$ each with probability $0.3/(m-2)$.
We restrict the application root node to be placed onto the physical root node, considering that some portion of processing has to be performed on the core cloud possibly due to the constraint of database location (see Fig. \ref{chap:intro:fig:scenario}). We consider $100$ application arrivals and simulate with $100$ different random seeds to obtain the overall performance. The placement cost of a single node or link is uniformly distributed between $0$ and a maximum cost. For the root application node, the cost is divided by a factor of $10$. We set the design parameter $\gamma=2$. 

\begin{figure*}
\center{\includegraphics[width=1.3\columnwidth]{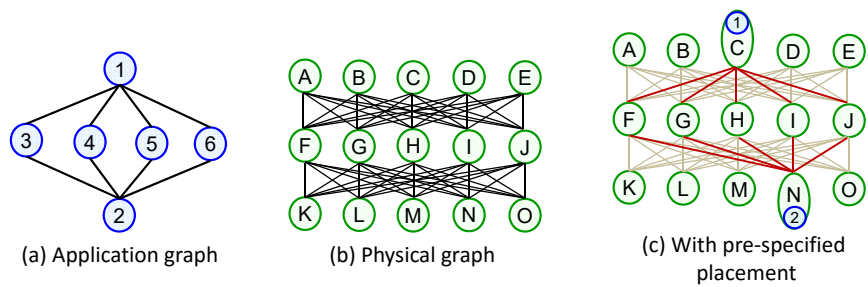}}
\caption{Example where application and physical graphs are not trees: (a) application graph, (b) physical graph, (c) restricted physical graph with pre-specified placement of application nodes $1$ and $2$.}
\label{fig:nonTreeExample} 
\end{figure*}

Figures~\ref{fig:simFixed} and \ref{fig:simUnfixed} show the maximum resource utilization, i.e., the value of (\ref{eq:objOnline}), averaged over results from different random seeds\footnote{We only consider those random seeds which produce a maximum resource utilization that is smaller than one, because otherwise, the physical network is considered as overloaded after hosting 100 applications. We also observed in the simulations that the number of accepted applications is similar when using different methods. The relative degradation in the number of accepted applications of the proposed method compared with other methods never exceeds 2\% in the simulations.}, respectively with and without pre-specified placement of junction nodes.
In Figs.~\ref{fig:simFixed}(a) and \ref{fig:simUnfixed}(a), the number of physical nodes is randomly chosen from the interval $[2,50]$; and in Figs.~\ref{fig:simFixed}(b) and \ref{fig:simUnfixed}(b), the maximum cost per application node/link is set to $0.015$.
It is evident that the proposed method outperforms those methods in comparison. The resource utilization tends to converge when the number of physical nodes is large because of the fixed root placement. As mentioned earlier, practical versions of greedy and Vineyard algorithms that have LP-relaxation and rounding may perform worse than what our current results show.

We now explain why the proposed method outperforms other methods. We first note that the uniqueness in the proposed algorithm is that it uses a non-linear objective function for placing each new application, whereas the baseline methods and most other existing approaches use linear objective functions. 
The exponential-difference cost (\ref{eq:objOfflineExtSum}) with (\ref{eq:onlineExpObj_fnk}) and (\ref{eq:onlineExpObj_fl}) used in the proposed algorithm for the placement of each newly arrived application graph aims at both load balancing and reducing sum resource utilization. It leaves more space for applications that arrive in the future. Therefore, it outperforms the greedy approach which does not take future arrivals into account.
The Vineyard approach does not strongly enforce load balancing unless operating close to the resource saturation point, due to the characteristics of its objective function used in each subroutine of application arrival.

When comparing Fig.~\ref{fig:simFixed} to  Fig.~\ref{fig:simUnfixed}, we can find that the performance gaps between the proposed method and other methods are larger when the junction nodes are not placed beforehand. This is mainly because the judgment of whether Algorithm \ref{algTreeToTreeNotFixedSub} has failed is based on the factor $\beta^{1+H}$, and for Algorithm \ref{algMultiLine} it is based on $\beta$. It follows that Algorithm \ref{algTreeToTreeNotFixedSub} is less likely to fail when $H>0$. In this case, the value of $\hat{J}$ is generally set to a smaller value by the doubling procedure in Algorithm \ref{algDoubling}. A smaller value of $\hat{J}$ also results in a larger change in the exponential-difference cost when the amount of existing load changes\footnote{This is except for the top-level instance of Unplaced($v,h$) due to the division by $\beta^h$ in Line \ref{algUnfixedMinLine} of Algorithm \ref{algTreeToTreeNotFixedSub}.}. This brings a better load balancing on average (but not for the worst-case, the worst-case result is still bounded by the bounds derived earlier in this paper).

\section{Discussion}
\label{sec:discussion}
\textbf{Is the Tree Assumption Needed?} For ease of presentation and considering the practical relevance to MEC applications, we have focused on tree-to-tree placements in this paper. However, the tree assumption is \emph{not} absolutely necessary for our algorithms to be applicable. 
For example, consider the placement problem shown in Fig. \ref{fig:nonTreeExample}, where the application graph contains two junction nodes\footnote{For non-tree graphs, a junction node can be defined as those nodes that are not part of a simple branch.} (nodes $1$ and $2$) and multiple simple branches (respectively including nodes $3$, $4$, $5$, and $6$) between these two junction nodes. Such an application graph is common in applications where processing can be parallelized at some stage. The physical graph shown in Fig. \ref{fig:nonTreeExample}(b) still has a hierarchy, but we now have connections between all pairs of nodes at two adjacent levels. Obviously, neither the application nor the physical graph in this problem has a tree structure.

Let us assume that junction node $1$ has to be placed at the top level of the physical graph (containing nodes A, B, C, D, E), junction node $2$ has to be placed at the bottom level of the physical graph (containing nodes K, L, M, N, O), and application nodes $3,4,5,6$ have to be placed at the middle level of the physical graph (containing nodes F, G, H, I, J). One possible junction node placement under this restriction is shown in Fig. \ref{fig:nonTreeExample}(c). With this pre-specified junction node placement, the mapping of each application node in $\{3,4,5,6\}$ can be found by the simple branch placement algorithm (Algorithm \ref{algDoubling} which embeds Algorithm \ref{algMultiLine}) introduced earlier, because it only needs to map each application node in $\{3,4,5,6\}$ onto each physical node in $\{$F, G, H, I, J$\}$, and find the particular assignment that minimizes (\ref{eq:objOfflineExtSum}) with (\ref{eq:onlineExpObj_fnk}) and (\ref{eq:onlineExpObj_fl}).
Therefore, in this example, when the junction node placements are pre-specified, the proposed algorithm can find the placement of other application nodes with $O(V{}^{3}N{}^{2})$ time-complexity, which is the complexity of Algorithm~\ref{algDoubling} as discussed in Section~\ref{sec:online-fixed-compratio}.
When the junction node placements are not pre-specified, the proposed algorithm can find the placement of the whole application graph with $O(V{}^{3}N{}^{4})$ time-complexity, because here $H=2$ (recall that the complexity result was derived in Section \ref{sec:online-unfixed-competRatio}).

We envision that this example can be generalized to a class of application and physical graphs where there exist a limited number of junction nodes that are not placed beforehand. The algorithms proposed in this paper should still be applicable to such cases, as long as we can find a limited number of cycle-free paths between two junction nodes when they are placed on the physical graph. We leave a detailed discussion on this aspect as future work.

\textbf{Practical Implications:} 
Besides the proposed algorithms themselves, the results of this paper also reveal the following insights that may guide future implementation:
\begin{enumerate}
\item The placement is easier when the junction nodes are placed beforehand. This is obvious when comparing the time-complexities and competitive ratios for cases with and without unplaced junction nodes.
\item There is a trade-off between instantaneously satisfying the objective function and leaving more available resources for future applications. Leaving more available resources may cause the system to operate in a sub-optimal state for the short-term, but future applications may benefit from it. This trade-off can be controlled by defining an alternative objective function which is different from (but related to) the overall objective that the system tries to achieve (see Section \ref{sub:unplacedAlgDesign}).
\end{enumerate}

\textbf{Performance Bound Comparison:} As mentioned in Section \ref{sec:intro}, \cite{PODC2011} is the only work which we know that has studied the competitive ratio of online application placement considering both node and link optimization. Our approach has several benefits compared to \cite{PODC2011} as discussed in Section \ref{sub:contributionsInIntro}. Besides those benefits, we would like to note that the proposed algorithm outperforms \cite{PODC2011} in time-complexity, space-complexity, and competitive ratio when the placements of all junction nodes (if any) are pre-specified. The performance bounds of these two approaches can be found in Sections \ref{sub:relatedWork} and  \ref{sec:mainResults}, respectively. 
Note that a linear application graph does not have any junction node, thus it falls into the above category. Linear application graphs are the case for a typical class of MEC applications (see the example in Section \ref{subsec:MotivatingExample}) as well as for related problems such as service chain embedding \cite{serviceChainInfocom2016,rost2016service,Lukovszki2015}.
When some junction node placements are not pre-specified, our approach provides a performance bound comparable to that in \cite{PODC2011}, because $H\leq D$. Moreover, \cite{PODC2011} does not provide exact optimal solutions for the placement of a single linear application graph; it also does not have simulations to show the average performance of the algorithm.

\textbf{Tightness of Competitive Ratio:} 
By comparing the competitive ratio result of our approach to that in \cite{PODC2011}, we see that both approaches provide poly-log competitive ratios for the general case. It is however unclear whether this is the best performance bound one can achieve for the application placement problem. This is an interesting but difficult aspect worth studying in the future.

\section{Conclusions}
\label{sec:conclusion}

In this paper, the placement of an incoming stream of application graphs onto a physical graph has been studied under the MEC context. 
We have first proposed an exact optimal algorithm for placing one linear application graph onto a tree physical graph which works for a variety of objective functions.
Then, with the goal of minimizing the maximum resource utilization at physical nodes and links, we have proposed online approximation algorithms for placing tree application graphs onto tree physical graphs. When the maximum number of unplaced junction nodes on any path from the root to a leaf (in the application graph) is a constant, the proposed algorithm has polynomial time and space complexity and provides poly-log worst-case optimality bound (i.e., competitive ratio). Besides the theoretical evaluation of worst-case performance, we have also shown the average performance via simulation. A combination of these results implies that the proposed method performs reasonably well on average and it is also robust in extreme cases. 

The results in this paper can be regarded as an initial step towards a more comprehensive study in this direction. Many constraints in the problem formulation are for ease of presentation, and can be readily relaxed for a more general problem.
For example, as discussed in Section~\ref{sec:discussion}, the tree-topology restriction is not absolutely essential for the applicability of our proposed algorithms. The algorithms also work for a class of more general graphs as long as the cycle-free constraint is satisfied. 
While we have not considered applications leaving at some time after their arrival, our algorithm can be extended to incorporate such cases, for example using the idea in \cite{AzarTemporaryTasks}. The algorithm for cases with unplaced junction nodes is essentially considering the scenario where there exists some low-level placement (for each of the branches) followed by some high level placement (for the junction nodes). Such ideas may also be useful in developing practical distributed algorithms with provable performance guarantees.


\appendices












\section{Approximation Ratio for Cycle-free Mapping}

\label{app:ApproxRatioCycleFree}

We focus on how well the cycle-free restriction approximates
the more general case which allows cycles, for the placement of a
single linear application graph. We first show that with the objective
of load balancing (defined in (\ref{eq:objOnline}) in Section \ref{subSec:overallObjectiveinProbFormulation}), the problem of placing a single linear application
graph onto a linear physical graph when allowing cycles is NP-hard, and then discuss the
approximation ratio of the cycle-free restriction.

\begin{proposition}The line-to-line placement problem for the objective function defined in (\ref{eq:objOnline}) \emph{while allowing
cycles} is NP-hard. \end{proposition}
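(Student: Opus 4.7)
The plan is to reduce from minimum makespan scheduling on unrelated parallel machines (MMSUPM), mirroring the structure of the NP-hardness proof of Proposition~\ref{propTreeToTreeNPHard} but adapted to the line-to-line setting where cycles are permitted. The crucial observation is that without the cycle-free restriction, the mapping $\pi$ of a linear application graph onto a linear physical graph is no longer forced to be monotonic: any application node can be placed at any physical node independently of where its neighbors are placed. This removes the only structural constraint that distinguished the line-to-line case from scheduling on parallel machines.

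Concretely, given an arbitrary MMSUPM instance with $V$ jobs, $N$ machines, and processing times $p_{v,n}$, I would build an instance of our problem as follows: take a linear application graph with $V$ nodes and a linear physical graph with $N$ nodes, use $K=1$ resource type, set the node placement costs to $d_{v\to n,1} = p_{v,n}$, and set every edge placement cost $b_{e\to l}$ to zero. Under this construction, each physical link $l$ has aggregated load ${q}_{l} = 0$, so the objective (\ref{eq:objOnline}) collapses to $\min_{\pi}\max_{n}\sum_{v:\pi(v)=n} p_{v,n}$, which is precisely the makespan. Since any assignment $\pi:\mathcal{V}\to\mathcal{N}$ is feasible when cycles are allowed, the two optimization problems coincide. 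A polynomial-time algorithm for our problem would therefore solve MMSUPM in polynomial time, yielding NP-hardness.

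The only subtle step, and the one that requires some care, is arguing cleanly that dropping the cycle-free restriction truly eliminates the ordering constraint on $\pi$. In the cycle-free regime, the sequence $\pi(1),\pi(2),\ldots,\pi(V)$ is monotonic along the physical line, which forces the placement to partition the application nodes into contiguous blocks mapped to adjacent physical nodes---a restriction strictly incomparable to arbitrary machine assignment. Once cycles are permitted, an application edge $(v-1,v)$ may be routed along any subpath of the physical line (incurring zero cost by construction), so every mapping is realizable. This verification is the main, though mild, obstacle; after it, the reduction and the NP-hardness conclusion follow immediately from the strong NP-hardness of MMSUPM.
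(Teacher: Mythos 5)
Your proposal is correct and follows essentially the same route as the paper: a reduction from MMSUPM in which the edge demands are set to zero, so that with cycles allowed every assignment of application nodes to physical nodes is feasible and the objective (\ref{eq:objOnline}) collapses to the makespan. Your write-up just spells out the construction (costs $d_{v\to n,1}=p_{v,n}$, zero link costs) and the observation that dropping the cycle-free restriction removes the ordering constraint, both of which the paper leaves implicit.
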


\begin{proof}The proof is similar with the proof of Proposition~\ref{propTreeToTreeNPHard} in Section \ref{sec:hardnessTreeToTree},
namely the problem can be reduced from the minimum makespan scheduling
on unrelated parallel machines (MMSUPM) problem. Consider the special
case where the edge demand is zero, then the problem is the same with
the MMSUPM problem, which deals with placing $V$ jobs onto $N$ machines
without restriction on their ordering, with the goal of minimizing
the maximum load on each machine. \end{proof}

To discuss the approximation ratio of the cycle-free assignment, we
separately consider edge costs and node costs. The worst case ratio
is then the maximum among these two ratios, because we have $\max\left\{ r_{1}x_{1},r_{2}x_{2}\right\} \leq\max\left\{ r_{1},r_{2}\right\} \cdot \max\left\{ x_{1},x_{2}\right\} $,
for arbitrary $r_{1},r_{2},x_{1},x_{2}\geq0$. The variables $x_{1}$ and
$x_{2}$ can respectively denote the true optimal maximum costs at
nodes and links, and the variables $r_{1}$ and $r_{2}$ can be their
corresponding approximation ratios. Then, $\max\left\{ x_{1},x_{2}\right\} $
is the true optimal maximum cost when considering nodes and links
together, and $\max\left\{ r_{1},r_{2}\right\} $ is their joint approximation
ratio. The joint approximation ratio $\max\left\{ r_{1},r_{2}\right\} $
is tight (i.e., there exists a problem instance where the actual optimality gap is arbitrarily close the approximation ratio, recall that the approximation ratio is defined in an upper bound sense) when $r_{1}$ and $r_{2}$ are tight, because we can construct
worst-case examples, one with zero node demand and another with zero
link demand, and there must exist one worst-case example which has
approximation ratio $\max\left\{ r_{1},r_{2}\right\} $.

In the following discussion, we assume that the application and physical nodes are indexed in the way described in Section \ref{subSec:lineToLineProbFormulation}.
The following proposition shows that cycle-free placement is always optimal
when only the edge cost is considered.

\begin{proposition}Cycle-free placement on tree physical graphs always
has lower or equal maximum edge cost compared with placement that allows
cycles. \end{proposition}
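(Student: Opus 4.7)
The plan is to exhibit, for any placement $\pi$ allowing cycles, a cycle-free placement $\pi'$ whose cost is no larger than $\pi$'s on \emph{every} physical link; this immediately gives $\max_l q_l^{\pi'} \le \max_l q_l^{\pi}$, which (since cycle-free placements form a subset of all placements) is equivalent to the claim.

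First I would fix the unique path $P^{*} = n_0 n_1 \cdots n_M$ in the physical tree between $n_0 \triangleq \pi(1)$ and $n_M \triangleq \pi(V)$, and for every physical vertex $n$ set $\mathrm{pos}(n)\in\{0,1,\ldots,M\}$ to the index on $P^{*}$ of its (unique) closest vertex of $P^{*}$, i.e.\ the standard tree projection onto the path. The crucial observation is that for any two tree-adjacent vertices $u,u'$, $|\mathrm{pos}(u)-\mathrm{pos}(u')|\le 1$: the edge $uu'$ either lies on $P^{*}$ (positions shift by one) or hangs off $P^{*}$ (projections coincide).

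Next I would consider the walk $W$ in the physical tree obtained by concatenating the unique tree paths from $\pi(v)$ to $\pi(v{+}1)$, and define $g(v)$ as the largest value of $\mathrm{pos}(\cdot)$ attained by any vertex of $W$ up to the moment $W$ reaches $\pi(v)$. Then $g$ is non-decreasing with $g(1)=0$ and $g(V)=M$, so $\pi'(v)\triangleq n_{g(v)}$ places the application nodes monotonically along $P^{*}$ and is therefore cycle-free. For any physical link $l\notin P^{*}$ we get $q_l^{\pi'}=0\le q_l^{\pi}$ trivially, since all $\pi'(v)$ sit on $P^{*}$. For each link $l_j=(n_{j-1},n_j)$ on $P^{*}$, the unique application edge straddling $l_j$ under $\pi'$ is $(v_j, v_j{+}1)$, where $v_j$ is the largest index with $g(v_j)<j$.

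The main (and only nontrivial) step is to show that this same edge $(v_j, v_j{+}1)$ already crosses $l_j$ under $\pi$. During the $\pi(v_j)\to\pi(v_j{+}1)$ segment of $W$ the projection-position advances from $g(v_j)<j$ to $g(v_j{+}1)\ge j$, and by the adjacency observation above it changes by at most one per tree-step, so $W$ must pass from the $\mathrm{pos}<j$ side of $l_j$ to the $\mathrm{pos}\ge j$ side during that segment, which in a tree forces it to physically traverse $l_j$. Consequently $b_{(v_j,v_j+1)\to l_j}$ is one of the nonnegative summands of $q_{l_j}^{\pi}$, giving $q_{l_j}^{\pi'}=b_{(v_j,v_j+1)\to l_j}\le q_{l_j}^{\pi}$. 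The hard part is precisely this projection-monotonicity step that links the combinatorial definition of $g(\cdot)$ to an actual tree crossing under $\pi$; once it is in hand, taking $\max_l$ on both sides finishes the proof.
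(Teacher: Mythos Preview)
Your proof is correct, and it takes a genuinely different route from the paper's.

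The paper argues by local modification/contradiction: assuming a non-cycle-free placement beats every cycle-free one, it locates a ``cycle'' (application nodes $v$ and $v_1>v+1$ at the same physical node $n$, with the intermediate application nodes sitting on descendants of $n$) and collapses it by moving $v{+}1,\ldots,v_1{-}1$ onto $n$; since this only removes load from real physical links, the maximum edge cost cannot increase, and (implicitly) iterating the collapse yields a cycle-free placement no worse than the original. Your argument is instead a one-shot global construction: project the whole line onto the tree path $P^*$ from $\pi(1)$ to $\pi(V)$ via the running-max of the $P^*$-projection index, producing a cycle-free $\pi'$ for which you establish the \emph{link-wise} inequality $q_l^{\pi'}\le q_l^{\pi}$ for every $l$. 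Your construction is more explicit (it names $\pi'$ in closed form and proves a stronger per-link domination), whereas the paper's three-line argument is terser but leans on an implicit iteration/termination step and an informal characterization of what a ``cycle'' must look like. Both reach the same conclusion; your version additionally shows that one can always match the endpoints $\pi(1),\pi(V)$ of the original placement while staying on a single monotone sub-path.
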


\begin{proof}Suppose a placement that contains cycles produces a lower
maximum edge cost than any cycle-free placement, then there exists
$v$ and $v_{1}$ $(v_{1}>v+1)$ both placed on a particular node
$n$, while nodes $v+1,...,v_{1}-1$ are placed on some nodes among
$n+1,...,N$. In this case, placing nodes $v+1,...,v_{1}-1$ all onto
node $n$ never increases the maximum edge cost, which shows a contradiction.
\end{proof}

For the node cost, we first consider the case where the physical graph
is a single line. We note that in this case the cycle-free placement
essentially becomes an ``ordered matching'', which
matches $V$ items into $N$ bins, where the first bin may contain
items $1,...,v_{1}$, the second bin may contain items $v_{1}+1,...,v_{2}$,
and so on. We can also view the problem as partitioning the ordered
set $\mathcal{{V}}$ into $N$ subsets, and each subset contains consecutive
elements from $\mathcal{{V}}$.

\begin{proposition} \label{prop:cycleFreeApproxRatio2} When each application node has the same cost when placing it on any physical node, then the cycle-free line-to-line
placement has a \emph{tight} approximation ratio of 2. \end{proposition}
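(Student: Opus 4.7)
The plan is to prove the proposition in two parts: a $2$-approximation upper bound via a direct greedy (ordered first-fit) argument, and a matching family of instances demonstrating tightness.

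For the upper bound, let $B := J_{\pi^{o*}}$ denote the optimum of the unrestricted (cycle-allowed) placement, and let $w_v$ denote the uniform weight of application node $v$. Under the hypothesis that each application node has the same cost on every physical node, two properties of $B$ are immediate: $w_v \le B$ for every $v$ (otherwise a single item already exceeds the optimum), and $\sum_v w_v \le NB$ (since the optimum splits the total load among the $N$ physical nodes). I would then analyze the following cycle-free greedy packing: scan $v = 1, 2, \ldots, V$ in order, accumulating weight on the current physical node; whenever adding $w_v$ would push the current total above $2B$, close the current physical node and open the next one. Two observations complete the argument: \textbf{(i)} every \emph{closed} physical node has accumulated sum strictly greater than $2B - w_{\max} \ge B$, since otherwise the rejected item would have fit; and \textbf{(ii)} therefore, if $k$ physical nodes are opened, the first $k-1$ contribute more than $(k-1)B$ to $\sum_v w_v \le NB$, giving $k \le N$. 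Every physical node then carries load at most $2B$, so the cycle-free optimum is at most $2B$, i.e.\ the ratio is at most $2$.

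For tightness, I would build an explicit family of instances whose ratio approaches the bound. The seed is $V=3$, $N=2$ with weights $(1,2,1)$: the unrestricted optimum pairs the two weight-$1$ nodes on a single physical node (cost $2$) and places the middle node alone on the other, but any cycle-free placement must group the weight-$2$ node with at least one of its weight-$1$ neighbors, forcing one physical node to carry at least $3$. I would then generalize by inserting zero-weight separators and replicating the pattern, e.g.\ $(1, 2, 0, 2, 0, \ldots, 0, 2, 1)$ on $N$ physical nodes, arguing that no ordered split can balance the weight-$2$ items as effectively as the unrestricted placement does, and that the attainable ratio can be pushed arbitrarily close to $2$ as the instance grows.

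The main obstacle is the tightness half of the proof: the greedy bound of $2$ is short and routine, but producing a family that saturates the bound is delicate. Symmetric tripartite instances already yield ratio $3/2$, and pushing this toward $2$ requires weight sequences for which \emph{every} cycle-free ordered split cuts across an unfavorable boundary, forcing a single physical node to accumulate two nearly-full chunks while unrestricted placement remains free to pair non-adjacent items. Verifying that no ordered partition can beat the claimed bound on the extremal family, without accidentally leaving a slack split available, is the technical heart of the argument.
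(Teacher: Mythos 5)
Your upper-bound argument is correct and is essentially the paper's argument in a more streamlined form: the paper runs first-fit with bin size $\mathrm{OPT}$ to obtain at most $2N$ bins and then merges adjacent pairs, whereas you run ordered first-fit with threshold $2B$ directly and count closed bins; both rest on exactly the two facts $w_v\le B$ (which uses the hypothesis of uniform node costs) and $\sum_v w_v\le NB$. That half is fine.

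The tightness half, however, has a genuine gap, and it is precisely the part you flag as unresolved. The family $(1,2,0,2,0,\ldots,0,2,1)$ does not push the ratio toward $2$: with $m$ weight-$2$ items and $N=m+1$ physical nodes the unrestricted optimum is $2$, but the ordered partition $\{1,2\},\{0,2\},\ldots,\{0,2\},\{1\}$ attains maximum load $3$, so every member of this family has ratio exactly $3/2$ --- no better than your seed (and with $N\ge m+2$ the cycle-free ratio drops to $1$). The missing idea is that the ordering must force a \emph{cascade} rather than a single bad cut. The paper's construction takes $N$ items of weight $(1-\epsilon)\cdot\mathrm{OPT}$ followed by $N$ items of weight $\epsilon\cdot\mathrm{OPT}$, with $\epsilon$ slightly larger than $1/(N+1)$. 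The unrestricted optimum pairs one large with one small item per bin, for load $\mathrm{OPT}$; but under the ordering, any placement with maximum load below $(2-2\epsilon)\mathrm{OPT}$ can put at most one large item per bin, so the $N$ large items occupy all $N$ bins one each, which then forces all $N$ small items into the last bin, whose load $(1-\epsilon+N\epsilon)\mathrm{OPT}$ again exceeds $(2-2\epsilon)\mathrm{OPT}$. Letting $N\to\infty$ (hence $\epsilon\to 0$) drives the ratio to $2$. Your interleaved pattern never creates this pile-up because each heavy item has a light neighbor on both sides that can be absorbed into its own bin, so the adversarial structure you need --- all the ``partners'' of the heavy items stranded on one side --- is exactly what your construction destroys.
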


\begin{proof} Suppose we have $V$ items that can be packed into
$N$ bins by a true optimal algorithm (which does not impose ordering
on items), and the optimal cost at each bin is $\mathrm{OPT}$.

To show that the worst case cost ratio resulting from the ordering
cannot be larger than 2, we consider a bin packing where the size of
each bin is $\mathrm{OPT}$. (Note that the bin packing problem
focuses on minimizing the number of bins with given bin size, which
is slightly different from our problem.) Because an optimal solution
can pack our $V$ items into $N$ bins with maximum cost $\mathrm{OPT}$,
when we are given that the size of each bin is $\mathrm{OPT}$, we
can also pack all the $V$ items into $N$ bins. Hence, the optimal
solution to the related bin packing problem is $N$. When we have
an ordering, we can do the bin packing by the first-fit algorithm
which preserves our ordering. The result of the first-fit algorithm
has been proven to be at most $2N$ bins \cite{bookApproxAlg}.

Now we can combine two neighboring bins into one bin. Because we have
at most $2N$ bins from the first-fit algorithm, we will have at most
$N$ bins after combination. Also because each bin has size $\mathrm{OPT}$
in the bin packing problem, the cost after combination will be at
most $2\cdot\mathrm{OPT}$ for each bin.
This shows that the worst-case cost for ordered items is at most $2\cdot\mathrm{OPT}$.

To show that the approximation ratio of 2 is tight, we consider the
following problem instance as a tight example. Suppose $V=2N$. Among the $2N$ items, $N$
of them have cost of $(1-\epsilon)\mathrm{\cdot OPT}$, where $\epsilon>\frac{1}{1+N}$,
the remaining $N$ have a cost of $\epsilon\cdot\mathrm{OPT}$. Obviously,
an optimal allocation will put one $(1-\epsilon)\mathrm{\cdot OPT}$
item and one $\epsilon\mathrm{\cdot OPT}$ item into one bin, and
the resulting maximum cost at each bin is $\mathrm{OPT}$.

A bad ordering could have all $(1-\epsilon)\mathrm{\cdot OPT}$ items
coming first, and all $\epsilon\mathrm{\cdot OPT}$ items coming afterwards.
In this case, if the ordered placement would like the maximum cost to be smaller than
$(2-2\epsilon)\mathrm{\cdot OPT}$, it would be impossible to fit
all the items into $N$ bins, because all the $(1-\epsilon)\mathrm{\cdot OPT}$
items will already occupy $N$ bins, as it is impossible to put more
than one $(1-\epsilon)\mathrm{\cdot OPT}$ item into each bin if the
cost is smaller than $(2-2\epsilon)\mathrm{\cdot OPT}$. Because $N\epsilon\mathrm{\cdot OPT}>\left(\frac{1}{\epsilon}-1\right)\epsilon\mathrm{\cdot OPT}=(1-\epsilon)\mathrm{\cdot OPT}$,
it is also impossible to put all $\epsilon\mathrm{\cdot OPT}$
into the last bin on top of the existing $(1-\epsilon)\mathrm{\cdot OPT}$
item. This means an ordered placement of these $V$ items into $N$ bins has a cost that is at least $(2-2\epsilon)\mathrm{\cdot OPT}$

Considering arbitrarily large $N$ and thus arbitrarily small $\epsilon$,
we can conclude that the approximation ratio of $2\mathrm{\cdot OPT}$ is tight. \end{proof}

\begin{corollary} When the physical graph is a tree and the maximum to minimum cost ratio for
placing application node $v$ on any physical node is ${d}_{\%,v}$, then the cycle-free line-to-line placement has an approximation ratio of $2V\cdot\max_{v}{d}_{\%,v}=O(V)$.
\end{corollary}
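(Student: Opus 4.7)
The plan is to reduce the line-to-tree problem to the uniform-cost line-to-line setting of Proposition~\ref{prop:cycleFreeApproxRatio2} by performing path selection and cost uniformization as a single joint step, so that the factor $\max_v {d}_{\%,v}$ enters only once rather than being squared.

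First, I would fix any path $P$ in the tree physical graph and introduce an auxiliary line-to-line problem on $P$ in which the cost of placing application node $v$ on every node of $P$ is the uniform value $c_v^{\max}\triangleq\max_{n\in\mathcal{N}}c_v(n)$. Because these auxiliary costs pointwise dominate the original costs on $P$, the cycle-free optimum of the original line-to-tree problem (which is always realizable by a placement confined to $P$) is bounded above by the cycle-free optimum of this auxiliary problem.

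Next, I would relate the unrestricted optimum of the auxiliary problem back to the tree optimum $\mathrm{OPT}$. Let $\pi^{*}$ attain $\mathrm{OPT}$ on the tree, so that $c_v(\pi^{*}(v))\leq\mathrm{OPT}$ for every $v$ (since $c_v(\pi^{*}(v))$ is one non-negative term in the sum defining the cost at the physical node hosting $v$). The trivial feasible placement that puts all $V$ application nodes onto a single node of $P$ has auxiliary cost
\begin{equation*}
\sum_{v}c_v^{\max}\;\leq\;\sum_{v}{d}_{\%,v}\,c_v^{\min}\;\leq\;\sum_{v}{d}_{\%,v}\,c_v(\pi^{*}(v))\;\leq\;V\cdot\max_{v}{d}_{\%,v}\cdot\mathrm{OPT},
\end{equation*}
using the definition ${d}_{\%,v}=c_v^{\max}/c_v^{\min}$ together with $c_v^{\min}\leq c_v(\pi^{*}(v))$. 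Hence the unrestricted optimum of the auxiliary problem is at most $V\cdot\max_{v}{d}_{\%,v}\cdot\mathrm{OPT}$.

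Finally, the auxiliary problem has uniform per-application-node costs, so Proposition~\ref{prop:cycleFreeApproxRatio2} applies and bounds its cycle-free optimum by twice its unrestricted optimum. Chaining these inequalities yields the stated bound $2V\cdot\max_{v}{d}_{\%,v}\cdot\mathrm{OPT}$, which is $O(V)$ when $\max_v{d}_{\%,v}$ is treated as a constant. The main obstacle is avoiding the compounding of approximation factors: performing the two reductions sequentially (first tree$\to$line with the original costs, then varying$\to$uniform) would give $(\max_v{d}_{\%,v})^{2}$, and the key trick is to replace the original costs by the uniform upper bounds $c_v^{\max}$ \emph{before} restricting to the path, so that $\max_v{d}_{\%,v}$ is charged exactly once.
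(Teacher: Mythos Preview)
Your argument is correct and follows the same skeleton as the paper's (very terse) proof: uniformize the per-node costs to the worst case $c_v^{\max}$ so that Proposition~\ref{prop:cycleFreeApproxRatio2} becomes applicable, absorb the resulting cost inflation via the ratio $\max_v d_{\%,v}$, and pick up the factor $V$ from the all-on-one-node placement. Your write-up is considerably more explicit than the paper's one-sentence sketch, and your care about charging $\max_v d_{\%,v}$ only once (by replacing the costs \emph{before} restricting to a path) is a genuine clarification that the paper leaves implicit. One small observation: the feasible placement you exhibit to bound the unrestricted auxiliary optimum---putting all $V$ application nodes on a single physical node of $P$---is itself cycle-free, so it directly bounds $\mathrm{CF\text{-}OPT}$ of the auxiliary problem without going through Proposition~\ref{prop:cycleFreeApproxRatio2}; hence the factor $2$ in your chain (and in the paper's statement) is actually slack. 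This is consistent with the paper's own remark that tightness of this corollary is left open.
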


\begin{proof} This follows from the fact that $\mathrm{OPT}$ may
choose the minimum cost for each $v$ while the ordered assignment
may have to choose the maximum cost for some $v$, and also, in the worst
case, the cycle-free placement may place all application nodes onto
one physical node. The factor $2$ follows from Proposition \ref{prop:cycleFreeApproxRatio2}. \end{proof}

It is not straightforward to find out whether the bound in the above corollary is
tight or not, thus we do not discuss it here.

We conclude that the cycle-free placement always brings optimal link
cost, which is advantageous.
The approximation ratio of node costs can be $O(V)$ in some extreme
cases. However, the cycle-free restriction is still reasonable in
many practical scenarios. Basically, in these scenarios,
one cannot split the whole workload onto all the available servers
without considering the total link resource consumption. The analysis
here is also aimed to provide some further insights that
helps to justify in what practical scenarios the proposed work is
applicable, while further study is worthwhile for some other scenarios.

\section{Proof of Proposition \ref{prop:Bound}}
\label{app:ProofGraphPlacementApproxBound}

The proof presented here borrows ideas from \cite{OnlineRoutingWithExpCost},
but is applied here to the generalized case of graph mappings and
arbitrary reference offline costs $J_{\pi^{o}}$. 
For a given $\hat{J}$, we
define $\tilde{p}{}_{n,k}(i)={p}_{n,k}(i)/\hat{J}$, $\tilde{d}{}_{v\rightarrow n,k}(i)={d}_{v\rightarrow n,k}(i)/\hat{J}$,
$\tilde{q}{}_{l}(i)={q}_{l}(i)/\hat{J}$, and $\tilde{b}{}_{e\rightarrow l}(i)={b}_{e\rightarrow l}(i)/\hat{J}$.
To simplify the proof structure,
we first introduce some notations so that the link and node costs
can be considered in an identical framework, because it is not necessary
to distinguish them in the proof of this proposition. We refer to each type
of resources as an \emph{element}, i.e., the type $k$ resource at node
$n$ is an element, the resource at link $l$ is also an element.
Then, we define the aggregated cost up to application $i$ for element
$r$ as $\tilde{z}_{r}(i)$. The value of $\tilde{z}_{r}(i)$ can
be either $\tilde{p}{}_{n,k}(i)$ or $\tilde{q}{}_{l}(i)$ depending
on the resource type under consideration. Similarly, we define
$\tilde{w}{}_{r|\pi}(i)$ as the incremental cost that application $i$
brings to element $r$ under the mapping $\pi$. The value of $\tilde{w}{}_{r|\pi}(i)$
can be either $\sum_{\forall v:\pi(v)=n}\tilde{d}{}_{v\rightarrow n,k}(i)$
or $\sum_{\forall e=(v_{1},v_{2}):\left(\pi(v_{1}),\pi(v_{2})\right)\ni l}\tilde{b}{}_{e\rightarrow l}(i)$.
Note that both $\tilde{z}_{r}(i)$ and $\tilde{w}{}_{r|\pi}(i)$ are normalized
by the reference cost $\hat{J}$.

Using the above notations, the objective function in (\ref{eq:objOfflineExtSum}) with (\ref{eq:onlineExpObj_fnk}) and (\ref{eq:onlineExpObj_fl})
becomes 
\begin{equation}
\min_{\pi_{i}}\sum_{r}\left(\alpha^{\tilde{z}_{r}(i-1)+\tilde{w}{}_{r|\pi_{i}}(i)}-\alpha^{\tilde{z}_{r}(i-1)}\right).\label{app:ProofGraphPlacementApproxBound:eq:onlineExpObjWithElements}
\end{equation}
Note that due to the notational equivalence, (\ref{app:ProofGraphPlacementApproxBound:eq:onlineExpObjWithElements})
is the same as (\ref{eq:objOfflineExtSum}) with (\ref{eq:onlineExpObj_fnk}) and (\ref{eq:onlineExpObj_fl}).

Recall that $\pi^{o}$ denotes the reference offline mapping result,
let $\pi_{i}^{o}$ denote the offline mapping result for nodes that correspond to the $i$th application, and $\tilde{z}_{r}^{o}(i)$
denote the corresponding aggregated cost until application $i$. Define
the following potential function: 
\begin{equation}
\Phi(i)=\sum_{r}\alpha^{\tilde{z}_{r}(i)}\left(\gamma-\tilde{z}_{r}^{o}(i)\right),
\end{equation}
which helps us prove the proposition. Note that variables without superscript
``o'' correspond to the values resulting from Algorithm \ref{algMultiLine}
that optimizes the objective function (\ref{app:ProofGraphPlacementApproxBound:eq:onlineExpObjWithElements}) for each application independently.

The change in $\Phi(i)$ after new application arrival is
\begin{align}
 &\Phi(i)-\Phi(i-1)\nonumber \\
 & =  \sum_{r:\exists\pi_{i}(\cdot)=r}\left(\alpha^{\tilde{z}_{r}(i)}-\alpha^{\tilde{z}_{r}(i-1)}\right)\left(\gamma-\tilde{z}_{r}^{o}(i-1)\right) \nonumber \\
 & \quad -\sum_{r:\exists\pi_{i}^{o}(\cdot)=r}\alpha^{\tilde{z}_{r}(i)}\tilde{w}{}_{r|\pi_{i}^{o}}(i)\label{app:ProofGraphPlacementApproxBound:eq:potentialChange1}\\
 & \leq  \sum_{r:\exists\pi_{i}(\cdot)=r}\gamma\left(\alpha^{\tilde{z}_{r}(i-1)+\tilde{w}{}_{r|\pi_{i}}(i)}-\alpha^{\tilde{z}_{r}(i-1)}\right) \nonumber \\
 & \quad -\sum_{r:\exists\pi_{i}^{o}(\cdot)=r}\alpha^{\tilde{z}_{r}(i-1)}\tilde{w}{}_{r|\pi_{i}^{o}}(i)\label{app:ProofGraphPlacementApproxBound:eq:potentialChange2}\\
 & \leq  \sum_{r:\exists\pi_{i}^{o}(\cdot)=r}\gamma\left(\alpha^{\tilde{z}_{r}(i-1)+\tilde{w}{}_{r|\pi_{i}^{o}}(i)}-\alpha^{\tilde{z}_{r}(i-1)}\right) \nonumber \\
 & \quad -\alpha^{\tilde{z}_{r}(i-1)}\tilde{w}{}_{r|\pi_{i}^{o}}(i)\label{app:ProofGraphPlacementApproxBound:eq:potentialChange3}\\
 & =  \sum_{r:\exists\pi_{i}^{o}(\cdot)=r}\alpha^{\tilde{z}_{r}(i-1)}\left\{ \gamma\left(\alpha^{\tilde{w}{}_{r|\pi_{i}^{o}}(i)}-1\right)-\tilde{w}{}_{r|\pi_{i}^{o}}(i)\right\} , 
\label{app:ProofGraphPlacementApproxBound:eq:potentialChange4}
\end{align}
where the notation $\pi_{i}(\cdot)=r$ or $\pi_{i}^{o}(\cdot)=r$
means that application $i$ has occupied some resource from
element $r$ when respectively using the mapping from Algorithm \ref{algMultiLine}
or the reference offline mapping.

We explain the relationships in (\ref{app:ProofGraphPlacementApproxBound:eq:potentialChange1})--(\ref{app:ProofGraphPlacementApproxBound:eq:potentialChange4})
in the following. Equality (\ref{app:ProofGraphPlacementApproxBound:eq:potentialChange1}) follows from
\begin{align*}
 &  \Phi(i)-\Phi(i-1)\\
 & = \sum_{r}\alpha^{\tilde{z}_{r}(i)}\left(\gamma-\left(\tilde{z}_{r}^{o}(i-1)+\tilde{w}{}_{r|\pi_{i}^{o}}(i)\right)\right) \\
 & \quad -\sum_{r}\alpha^{\tilde{z}_{r}(i-1)}\left(\gamma-\tilde{z}_{r}^{o}(i-1)\right)\\
 & =  \sum_{r}\left(\alpha^{\tilde{z}_{r}(i)}-\alpha^{\tilde{z}_{r}(i-1)}\right)\left(\gamma-\tilde{z}_{r}^{o}(i-1)\right) \\
 & \quad -\sum_{r}\alpha^{\tilde{z}_{r}(i)}\tilde{w}{}_{r|\pi_{i}^{o}}(i)\\
 & =  \sum_{r:\exists\pi_{i}(\cdot)=r}\left(\alpha^{\tilde{z}_{r}(i)}-\alpha^{\tilde{z}_{r}(i-1)}\right)\left(\gamma-\tilde{z}_{r}^{o}(i-1)\right)  \\
 & \quad -\sum_{r:\exists\pi_{i}^{o}(\cdot)=r}\alpha^{\tilde{z}_{r}(i)}\tilde{w}{}_{r|\pi_{i}^{o}}(i),
\end{align*}
where the last equality follows from the fact that $\alpha^{\tilde{z}_{r}(i)}-\alpha^{\tilde{z}_{r}(i-1)}=0$
for all $r$ that $\forall\pi_{i}(\cdot)\neq r$, and $\tilde{w}{}_{r|\pi_{i}^{o}}(i)=0$
for all $r$ that $\forall\pi_{i}^{o}(\cdot)\neq r$. Inequality
(\ref{app:ProofGraphPlacementApproxBound:eq:potentialChange2}) follows from $\tilde{z}_{r}^{o}(i-1)\geq0$
and $\tilde{z}_{r}(i)=\tilde{z}_{r}(i-1)+\tilde{w}{}_{r|\pi_{i}}(i)$.
Note that the first term in (\ref{app:ProofGraphPlacementApproxBound:eq:potentialChange2}) is the same
as the objective function (\ref{app:ProofGraphPlacementApproxBound:eq:onlineExpObjWithElements}). Because
the mapping $\pi_{i}$ results from Algorithm \ref{algMultiLine}
which optimizes (\ref{app:ProofGraphPlacementApproxBound:eq:onlineExpObjWithElements}), we know that
the reference mapping $\pi_{0}$ must produce a cost $\alpha^{\tilde{z}_{r}(i-1)+\tilde{w}{}_{r|\pi_{i}^{o}}(i)}-\alpha^{\tilde{z}_{r}(i-1)}$
that is greater than or equal to the optimum, hence following (\ref{app:ProofGraphPlacementApproxBound:eq:potentialChange3}).
Equality (\ref{app:ProofGraphPlacementApproxBound:eq:potentialChange4}) is obvious.

Now we proof that the potential function $\Phi(i)$ does not increase
with $i$, by proving that (\ref{app:ProofGraphPlacementApproxBound:eq:potentialChange4}) is not larger than zero. For the $i$th request, the reference offline mapping produces
the mapping result $\pi_{i}^{o}$. Therefore, for all $r$ such that
$\exists\pi_{i}^{o}(\cdot)=r$, we have $0\leq\tilde{w}{}_{r|\pi_{i}^{o}}(i)\leq J_{\pi^{o}}/\hat{J}\leq1$.
Hence, we only need to show that $\gamma\left(\alpha^{\tilde{w}{}_{r|\pi_{i}^{o}}(i)}-1\right)-\tilde{w}{}_{r|\pi_{i}^{o}}(i)\leq 0$ for $\tilde{w}{}_{r|\pi_{i}^{o}}(i)\in[0,1]$,
which is true for $\alpha\leq1+1/\gamma$. From (\ref{app:ProofGraphPlacementApproxBound:eq:potentialChange1})--(\ref{app:ProofGraphPlacementApproxBound:eq:potentialChange4}), it follows that $\Phi(i)\leq\Phi(i-1)$. (We take $\alpha=1+1/\gamma$ because
this gives the smallest value of $\beta$.)

Because $\tilde{z}_{r}(0)=\tilde{z}_{r}^{o}(0)=0$, we have $\Phi(0)=\gamma(NK+L)$.
Because $\Phi(i)$ does not increase, $\alpha>1$, and $\tilde{z}_{r}^{o}(i)\leq1$
due to $J_{\pi^{o}}\leq\hat{J}$, we have 
\begin{eqnarray}
\left(\gamma-1\right)\alpha^{\max_{r}\tilde{z}_{r}(i)} & \leq & \left(\gamma-1\right)\sum_{r}\alpha^{\tilde{z}_{r}(i)} \nonumber \\
& \leq  & \Phi(i) \nonumber \\
& \leq & \Phi(0) \nonumber \\
&=&\gamma(NK+L).\label{app:ProofGraphPlacementApproxBound:eq:boundInequ}
\end{eqnarray}
Taking the logarithm on both sides of (\ref{app:ProofGraphPlacementApproxBound:eq:boundInequ}),
we have 
\begin{equation}
\max_{r}\tilde{z}_{r}(i)\leq\log_{\alpha}\left(\frac{\gamma(NK+L)}{\gamma-1}\right)=\beta,
\end{equation}
which proves the result because ${z}_{r}(i)=\tilde{z}_{r}(i)\cdot\hat{J}$.

\section*{Acknowledgment}

The authors gratefully thank Dr. Moez Draief, Dr. Ting He, Dr. Viswanath Nagarajan,  Dr. Theodoros Salonidis, and Dr. Ananthram Swami for their valuable suggestions to this work.

Contribution of S. Wang is partly related to his previous affiliation with Imperial College London. Contribution of M. Zafer is not related to his current employment with Nyansa Inc.

 \bibliographystyle{IEEEtran}
\bibliography{IEEEabrv,MappingAlgorithm}

\end{document}